\newcommand{\set}[1]{\left\{#1\right\}}
\newcommand{\pr}[1]{\left(#1\right)}
\newcommand{\fpr}[1]{\mathopen{}\left(#1\right)}
\newcommand{\spr}[1]{\left[#1\right]}
\newcommand{\fspr}[1]{\mathopen{}\left[#1\right]}
\newcommand{\abs}[1]{{\left|#1\right|}}
\newcommand{\np}{\textbf{NP}}
\newcommand{\funcdef}[3]{{#1}:{#2} \to {#3}}
\newcommand{\reals}{{\mathbb{R}}}
\DeclareRobustCommand{\dispfunc}[2]{%
  \ensuremath{%
  \ifthenelse{\equal{#2}{}}%
    {{#1}}%
    {{#1}\fpr{#2}}}}
\newcommand{\comment}[1]{}
\newcommand{\lab}[1]{\dispfunc{\mathit{lab}}{#1}}
\newcommand{\supp}[1]{\dispfunc{\mathit{sup}}{#1}}
\newcommand{\pre}[1]{\dispfunc{\mathit{pre}}{#1}}
\newcommand{\mach}[1]{\dispfunc{\mathit{M}}{#1}}
\newcommand{\greedy}[1]{\dispfunc{\mathit{gr}}{#1}}
\newcommand{\pind}[1]{\dispfunc{\mathit{p_{\mathit{ind}}}}{#1}}
\newcommand{\pprt}[1]{\dispfunc{\mathit{p_{\mathit{prt}}}}{#1}}
\newcommand{\blockp}[1]{\dispfunc{C_p}{#1}}
\newcommand{\blocks}[1]{\dispfunc{C_s}{#1}}
\newcommand{\rind}[1]{\dispfunc{r_{\mathit{ind}}}{#1}}
\newcommand{\rpart}[1]{\dispfunc{r_{\mathit{prt}}}{#1}}
\newcommand{\rank}[1]{\dispfunc{r}{#1}}
\newcommand{\diag}[1]{\dispfunc{diag}{#1}}
\newcommand{\mean}[2]{\operatorname{E}_{#1}\fspr{#2}}
\let\oldexample\example
\renewcommand\example{\oldexample\normalfont}
\definecolor{yafaxiscolor}{rgb}{0.3, 0.3, 0.3}
\definecolor{yafcolor1}{rgb}{0.4, 0.165, 0.553}
\definecolor{yafcolor2}{rgb}{0.949, 0.482, 0.216}
\definecolor{yafcolor3}{rgb}{0.47, 0.549, 0.306}
\definecolor{yafcolor4}{rgb}{0.925, 0.165, 0.224}
\definecolor{yafcolor5}{rgb}{0.141, 0.345, 0.643}
\definecolor{yafcolor6}{rgb}{0.965, 0.933, 0.267}
\definecolor{yafcolor7}{rgb}{0.627, 0.118, 0.165}
\definecolor{yafcolor8}{rgb}{0.878, 0.475, 0.686}
\tikzstyle{machnode} = [circle, fill = yafcolor3, text  = white, inner sep = 2pt, minimum width = 13.5pt]
\tikzstyle{machedge} = [draw, thick, >=latex, ->]
\tikzstyle{exnode} = [circle, fill = yafcolor5, text  = white, inner sep = 2pt, minimum width = 13.5pt]
\tikzstyle{exedge} = [draw, thick, >=latex, ->]
\tikzstyle{toynode} = [fill = yafcolor5, circle, inner sep = 1.5pt]
\tikzstyle{toyedge} = [draw = black, >=latex, ->, opacity = 0.5]
\tikzstyle{smallnode} = [inner sep = 1.5pt, font=\scriptsize]
\tikzstyle{mediumnode} = [inner sep = 1.5pt]
\tikzstyle{smalledge} = [draw = black, >=latex, ->]
\newsavebox{\edgebox}
\newcommand{\epito}{\usebox{\edgebox} }
\newcommand{\epitoc}{\usebox{\edgebox}}
\newcommand{\epispace}{\hspace{3mm}}
\newlength{\yafaxispad}
\newlength{\yaftlpad}
\newlength{\yaflabelpad}
\newlength{\yafaxiswidth}
\newlength{\yafticklen}
\def\pgfplots@drawtickgridlines@INSTALLCLIP@onorientedsurf#1{}
\newcommand{\yafdrawxaxis}[2]{
	\pgfplotstransformcoordinatex{#1}\let\xmincoord=\pgfmathresult 
	\pgfplotstransformcoordinatex{#2}\let\xmaxcoord=\pgfmathresult 
	\pgfsetlinewidth{\yafaxiswidth} 
	\pgfsetcolor{yafaxiscolor}
	\pgfpathmoveto{\pgfpointadd{\pgfpointadd{\pgfplotspointrelaxisxy{0}{0}}{\pgfqpointxy{\xmincoord}{0}}}{\pgfqpoint{-0.5\yafaxiswidth}{\yafaxispad}}}
	\pgfpathlineto{\pgfpointadd{\pgfpointadd{\pgfplotspointrelaxisxy{0}{0}}{\pgfqpointxy{\xmaxcoord}{0}}}{\pgfqpoint{0.5\yafaxiswidth}{\yafaxispad}}}
	\pgfusepath{stroke}

}
\newcommand{\yafdrawyaxis}[2]{
	\pgfplotstransformcoordinatey{#1}\let\ymincoord=\pgfmathresult 
	\pgfplotstransformcoordinatey{#2}\let\ymaxcoord=\pgfmathresult 
	\pgfsetlinewidth{\yafaxiswidth} 
	\pgfsetcolor{yafaxiscolor}
	\pgfpathmoveto{\pgfpointadd{\pgfpointadd{\pgfplotspointrelaxisxy{0}{0}}{\pgfqpointxy{0}{\ymincoord}}}{\pgfqpoint{\yafaxispad}{-0.5\yafaxiswidth}}}
	\pgfpathlineto{\pgfpointadd{\pgfpointadd{\pgfplotspointrelaxisxy{0}{0}}{\pgfqpointxy{0}{\ymaxcoord}}}{\pgfqpoint{\yafaxispad}{0.5\yafaxiswidth}}}
	\pgfusepath{stroke}
}
\newcommand{\yafdrawaxis}[4]{\yafdrawxaxis{#1}{#2}\yafdrawyaxis{#3}{#4}}
\pgfplotsset{axis y line=left, axis x line=bottom,
	tick align=outside,
	compat = 1.3,
	tickwidth=\yafticklen,
	clip = false,
	every axis title shift = 0pt,
    x axis line style= {-, line width = 0pt, opacity = 0},
    y axis line style= {-, line width = 0pt, opacity = 0},
    x tick style= {line width = \yafaxiswidth, color=yafaxiscolor, yshift = \yafaxispad},
    y tick style= {line width = \yafaxiswidth, color=yafaxiscolor, xshift = \yafaxispad},
    x tick label style = {font=\scriptsize, yshift = \yaftlpad},
    y tick label style = {font=\scriptsize, xshift = \yaftlpad},
    every axis y label/.style = {at = {(ticklabel cs:0.5)}, rotate=90, anchor=center, font=\scriptsize, yshift = -\yaflabelpad},
    every axis x label/.style = {at = {(ticklabel cs:0.5)}, anchor=center, font=\scriptsize, yshift = \yaflabelpad},
    x tick label style = {font=\scriptsize, yshift = 1pt},
    grid = major,
    major grid style  = {dash pattern = on 1pt off 3 pt},
	every axis plot post/.append style= {line width=\yafaxiswidth} ,
	legend cell align = left,
	legend style = {inner sep = 1pt, cells = {font=\scriptsize}},
	legend image code/.code={%
		\draw[mark repeat=2,mark phase=2,#1] 
		plot coordinates { (0cm,0cm) (0.15cm,0cm) (0.3cm,0cm) };%
	} 
}
\begin{document}

\title{Ranking Episodes using a Partition Model}

\author{Nikolaj Tatti}
\institute{
N. Tatti \at
HIIT, Department of Information and Computer Science, Aalto University, Finland\\
\email{nikolaj.tatti@aalto.fi}}


\maketitle

\begin{abstract}

One of the biggest setbacks in traditional frequent pattern mining is that
overwhelmingly many of the discovered patterns are redundant.
A prototypical
example of such redundancy is a freerider pattern where the pattern 
contains a true pattern and some additional noise events.  A technique for
filtering freerider patterns that has proved to be efficient in ranking
itemsets is to use a partition model where a pattern is divided into two
subpatterns and the observed support is compared to the expected support under
the assumption that these two subpatterns occur independently.

In this paper we develop a partition model for episodes, patterns discovered
from sequential data.  An episode is essentially a set of events,
with possible restrictions on the order of events. Unlike with itemset mining,
computing the expected support of an episode requires surprisingly sophisticated
methods.  In order to construct the model, we partition the episode
into two subepisodes. We then model how likely the events in each subepisode
occur close to each other.  If this probability is high---which is often the
case if the subepisode has a high support---then we can expect that when one
event from a subepisode occurs, then the remaining events occur also close by.
This approach increases the expected support of the episode, and if this
increase explains the observed support, then we can deem the episode
uninteresting.  We demonstrate in our experiments that using the partition
model can effectively and efficiently reduce the redundancy in episodes.

\end{abstract}




\section{Introduction}

Pattern mining is one of the most well-studied subfields in exploratory data
analysis.  One of the major setbacks of traditional frequent pattern mining
techniques is that the obtained results are heavily redundant. 
Hence, the focus of the pattern mining field has moved away from mining patterns
efficiently to reducing redundancy of the output. This has been especially
the case for mining itemsets.

A technique to reduce redundancy that has proved to be efficient for itemsets
is to use a partition model~\citep{webb:10:self-sufficient}. A partition model
for itemsets involves in dividing an itemset, say $Z$, into two subitemsets, say
$X$ and $Y$, and assume that items in $X$ and $Y$ are independent.
If the observed support of $Z$ is close to the expected support, then we deem $Z$
uninteresting.
In order to
select $X$ and $Y$ we simply iterate over all possible partitions and pick the
one that fits the best with the observed data.  For example, if $Z$ is an
itemset that contains an interesting pattern \emph{and} some independent noise
events, then the partition model is able to detect this and downplay the
importance of $Z$.

In this paper our goal is to reduce redundancy in episodes, a very general class of sequential
patterns~\citep{mannila:97:discovery}.  Essentially, an episode is a set of
events that should occur in a sequence. In addition, these events may have
constraints on the order in which they should occur. This order is expressed by
a directed acyclic graph (DAG).

While ranking and filtering patterns to reduce redundancy is well-studied for itemsets, it is surprisingly underdeveloped for episodes. The most
straightforward approach to rank episodes is to compare them against the
independence
model~\citep{gwadera:05:reliable,DBLP:conf/icdm/Low-KamRKP13,tatti:14:mining}.
In this paper we will introduce ranking technique based on partition models
instead of independence model. Our goal is that by using partition models we
will be able to reduce redundancy in episodes in a similar fashion that
partition models allow us to reduce redundancy in
itemsets~\citep{webb:10:self-sufficient}.

Computing the expected support for an episode is a more intricate process than
computing the expected support for an itemset. For example, to obtain the
expected support of an itemset according to the independence model we can
simply multiply the margins of individual items. On the other hand, to compute
the expectation for an episode, we need to construct a finite state machine,
where each state represents the episode events that we have seen so far (see
Section~\ref{sec:independence} for more details). We can then compute the
expected support by computing the probability of a random sequence reaching the
final state of the machine.

We will consider two types of partition models. In the first approach we
partition the episode into two subepisodes. If one or both of these subepisodes
have few gaps, then we will increase the probability of a whole subepisode to
occur in a sequence once we have seen at least one event from the subepisode.
This will increase the expected support of the episode.  In the second approach
we try to explain the support of an episode with an episode that has the same
events but impose more strict constraints on the order. In this case we will increase the
probability of events whenever they obey the more strict order. 

Fortunately, we can construct the partition model for both aforementioned cases
using the same finite state machine that we use to compute the expectation for
the independence model.  Roughly speaking, when computing the probability of
reaching the final state of the finite state machine, we will increase the
probability of a random sequence taking certain transitions. These transitions
will be determined either by the subepisodes (the first case) or by the
superepisode (second case). In both cases,  this will increase the
probability of a random sequence containing the episode and will increase the
expected support of an episode.

The rest paper of the paper is organized as follows. We introduce preliminary
notation in Section~\ref{sec:prel}. In Section~\ref{sec:rank} we describe how to
rank episodes given the model. In Section~\ref{sec:independence} we construct a
finite state machine that we need to compute the independence model.
Our main methodological contribution is given in the next two sections.
In Section~\ref{sec:partmodel} we obtain a partition model by boosting
certain transitions of the finite state machine. We introduce the partition
model using subepisodes and superepisodes in
Section~\ref{sec:select}. We discuss the related work in
Section~\ref{sec:related}. Finally, we introduce our experimental evaluation in
Section~\ref{sec:exp} and conclude the paper with discussion in
Section~\ref{sec:conclusions}.

\section{Preliminaries}\label{sec:prel}

We begin by introducing the notation that we will use throughout the paper.

Our input dataset consists of $m$ sequences $\mathcal{S} = S_1, \ldots, S_m$.
Each sequence contains events coming from some finite universe, which we will
denote by $\Sigma$.

We are interested in episodes introduced by~\citet{mannila:97:discovery} and
defined as follows.

\begin{definition}
An \emph{episode} $G = (V, E, \lab{})$ is a directed acyclic graph with labelled
vertices.  The labels are represented by the label function $\funcdef{\lab{}}{V(G)}{\Sigma}$, mapping
each vertex to a label.
\end{definition}

We will call $G$ a \emph{parallel} episode if $G$ has no edges.
On the other hand, an episode that represents a total order is called a \emph{serial} episode.

Informally, an episode represents a set of events that should occur in the order
that is consistent with the edges. More formally:

\begin{definition}
\label{def:cover}
Given a sequence $S = s_1, \ldots, s_n$ and an episode $G = (V, E, \lab{})$, we say
that $S$ \emph{covers} $G$ if there is an injective mapping $m$ from the vertices of $G$
to the indices of $s$, $\funcdef{m}{V(G)}{1, \ldots, n}$, such that
\begin{enumerate}
\item  labels are honored, $s_{m(v)} = \lab{v}$ for every $v \in V$, 
\item edges are honored, $m(v) < m(w)$ if $(v, w) \in E$.
\end{enumerate}
\end{definition}

\begin{example}
Consider an episode $G_1$ given in Figure~\ref{fig:toy}.
Definition~\ref{def:cover} implies that 
a sequence $S$ covers $G_1$ if and only if $S$ contains $a$
followed by $b$ and $c$ in arbitrary order, and finally followed by $d$, with any number of events
before between, or after these 4 events.
For example, $aebfcd$ covers $G_1$ due to a subsequence $abcd$ 
but $cabde$ does not since there is no $c$ between $a$ and $d$.
\end{example}

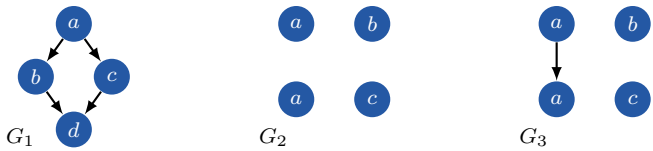
\begin{figure}[ht!]
\hfill
\begin{tikzpicture}
\node[exnode] (a1) {$a$};
\node[exnode] (a2) at (-0.5, -0.7) {$b$};
\node[exnode] (a3) at (0.5, -0.7) {$c$};
\node[exnode] (a4) at (0, -1.4) {$d$};
\draw[exedge] (a1) -- (a2);
\draw[exedge] (a1) -- (a3);
\draw[exedge] (a2) -- (a4);
\draw[exedge] (a3) -- (a4);
\node at (-0.7, -1.5) {$G_1$};
\end{tikzpicture}\hfill
\begin{tikzpicture}
\node[exnode] (a1) {$a$};
\node[exnode] (a2) at (0, -1) {$a$};
\node[exnode] (a3) at (1, 0) {$b$};
\node[exnode] (a4) at (1, -1) {$c$};
\node at (-0.3, -1.5) {$G_2$};
\end{tikzpicture}\hfill
\begin{tikzpicture}
\node[exnode] (a1) {$a$};
\node[exnode] (a2) at (0, -1) {$a$};
\node[exnode] (a3) at (1, 0) {$b$};
\node[exnode] (a4) at (1, -1) {$c$};
\node at (-0.3, -1.5) {$G_3$};
\draw[exedge] (a1) -- (a2);
\end{tikzpicture}
\hspace*{\fill}
\caption{Toy episodes used in examples}
\label{fig:toy}
\end{figure}

Note that an episode and its transitive closure represent essentially the same
pattern: an episode is covered if and only its transitive closure is covered.
For simplicity, we will assume that we are only dealing with transitively closed
episodes. However, for aesthetic reasons, whenever showing an episode we will
remove edges that are implied by the transitive closure.

Now that we have defined occurrence in a single sequence, we can define the 
support of an episode.

\begin{definition}
Let $G$ be an episode and $\mathcal{S} = S_1, \ldots, S_m$ be a collection of sequences.
The \emph{support} of $G$ is the number of sequences covering $G$,
\[
	\supp{G} = \abs{\set{i \mid S_i \text{ covers } G}}\quad.
\]
\end{definition}

Since the support is monotonically decreasing, discovering \emph{all} episodes
whose support is higher than some given threshold can be done efficiently using
\textsc{APriori} or \textsc{DFS} style approach.

Unlike itemsets, episodes are surprisingly difficult to handle. For example,
checking whether a sequence covers an episode is in fact an \np-hard problem~\citep{tatti:11:mining}.

We will focus on a more simple class of episodes, which are called strict
episodes~\citep{tatti:12:mining}.

\begin{definition}
An episode $G = (V, E, \lab{})$ is \emph{strict} if any two
distinct vertices $v, w \in V$ with the same label, $\lab{v} = \lab{w}$ we have
either $(v, w) \in E$ or $(w, v) \in E$.
\end{definition}

The need for using strict episodes stems from technical details that we will see
in later sections. Nevertheless, this class of episodes is large: it
contains all serial episodes, all episodes with unique labels. In addition, for
every parallel episode $G$, there is a strict episode $H$ such that a sequence
will cover $G$ if and only if the same sequence covers $H$. To obtain $H$ from $G$
simply connect all vertices with the same label. For example, $G_2$ in
Figure~\ref{fig:toy} is a parallel episode while $G_3$ is a strict episode, and
a sequence $S$ covers $G_2$ if and only if $S$ covers $G_3$ as well.

From now on we will assume that episodes are strict.

We will need a concept of an induced episode which is essentially a standard notion of
an induced graph.

\begin{definition}
Given an episode $G = (V, E, \lab{})$ and a subset of vertices $W \subseteq V$, we define
an induced episode $G(W)$ to be the episode with vertices $W$ and edges 
\[
	E(W) = \set{(v, w) \in E \mid v, w \in W})\quad.
\]
The vertices have the same labels as the vertices in $G$.
\end{definition}

\section{Ranking episodes based on expectation}\label{sec:rank}

In this section we describe how to rank episodes based on the expected support.
We will give the details for computing the expectation in the latter sections.

Formally,
consider that we are given an episode $G$ and a dataset of sequences
$\mathcal{S}  = S_1, \ldots, S_m$.
Unlike with itemsets we need to take into account the length of individual
sequences as longer sequences have a higher probability to cover
an episode.
Assume that we have a generative model $M$ for a sequence,
that allows us to compute the probability that 
$G$ occurs in a sequence of a certain length, that is, we can compute
\[
	p_k = p(X \text{ covers } G \mid \abs{X} = k, M),
\]
where $X$ is a random sequence of length $k$. We will define different variants
of $M$ in the next sections.

Let $\mathcal{X}$ be $m$ random sequences, each random sequence having the same length
as the input sequence, $\abs{X_i} = \abs{S_i}$. 
If we assume that each sequence in $\mathcal{X}$ is generated independently,
then the expected support of $G$ according to the model is then
\[
	\mu = \sum_{S \in \mathcal{S}} p_{\abs{S}}\quad.
\]
Moreover, we can easily show that the probability that $\supp{G}$ is equal to
$n$ is
\[
	p(\supp{G ; \mathcal{X}} = n \mid M) = \sum_{\mathcal{T} \subseteq \mathcal{S} \atop \abs{\mathcal{T}} = n} \prod_{S \in \mathcal{T}} p_{\abs{S}}\prod_{S \in \mathcal{S} \setminus \mathcal{T}} (1 - p_{\abs{S}}),
\]
where the sum goes over all subsets of $\mathcal{S}$ of size $n$. If all sequences
are of equal length, then this distribution is in fact a binomial distribution.

Assume that we observe the support to be $\supp{G ; \mathcal{S}} = n$.
Ideally, we would like to compute the rank to be the probability $p(\supp{G; \mathcal{X}} \geq n \mid M)$.
This value is close to $1$ whenever support is low and $0$ whenever the support is large.
Note that this quantity can be interpreted as a $p$-value. However, in this work
we will not make this interpretation and treat this quantity simply as a rank (see Section~\ref{sec:conclusions} for discussion about interpreting
this quantity as a $p$-value).
Since in practice most of the values will be very close to $0$ we consider the logarithm of the score, that is,
we define
\[
\begin{split}
	\rank{G \mid M} & = - \log p(\supp{G; \mathcal{X}} \geq n \mid M) \\
	                & = - \log 1 - \sum_{k = 1}^{n - 1} p(\supp{G; \mathcal{X}} = k \mid M) \quad.
\end{split}
\]
Episodes that have abnormally high support will have a high rank.  Computing
the rank can be done in $O(n^2m)$ with a simple recursive equation. However,
this may be slow if $n$, the observed support, is large. Hence, in practice we
will use well-known asymptotic estimates for $p(\supp{G; \mathcal{X}} \geq n \mid M)$.
If $n$ is large enough, we can estimate the probability with a normal
distribution $N(\mu, \sigma)$, where the variance $\sigma^2$ is 
\[
	\sigma^2 = \sum_{S \in \mathcal{S}} p_{\abs{S}}(1 - {p_\abs{S}})\quad.
\]
In practice, the input dataset is large
enough so that the approximation is accurate \emph{if} $\mu$ is not close to $0$.
If $\mu$ is small, say $\mu \leq 10$, this approximation becomes inaccurate.
In such cases, a common approach is to estimate the
probability with Poisson distribution with a mean of $\mu$.

\section{Independence model for episodes}\label{sec:independence}
In this section we review how to compute the expected support of an episode
using the independence model. The idea of computing the expected support
using the independence model was originally done by~\citet{gwadera:05:reliable}.
This approach requires us to construct a certain
finite state machine.  In later sections we will use this machine to build
the partition model.

\subsection{Finding episodes with finite state machine}

Our first goal is to construct a finite state machine from an episode.  This
machine has two purposes. Firstly, we can use it to compute the support of an
episode. Secondly, we can use it to compute the expected support, either using the independence
model, which we will review in Section~\ref{sec:indmodel} or the partition model
which we will introduce in Section~\ref{sec:partmodel}.

We start with a definition of a prefix graph which will turn out to be the states
of our machine.

\begin{definition}
Given an episode $G = (V, E, \lab{})$ and a subset of vertices $W \subseteq V$,
we say that an induced graph $H = G(W)$ is a \emph{prefix subgraph} if all ancestors of vertices in $W$ are also
included in $W$, that is,
\[
	v \in W \text{ and } (w, v) \in E \quad\text{implies}\quad w \in W\quad.
\]
We will denote the collection of all prefix graphs by $\pre{G}$.
\end{definition}

\begin{example}
Consider an episode $G$ given in Figure~\ref{fig:toyprefix}.
This episode has 6 prefix graphs $H_1, \ldots, H_6$, given also in
Figure~\ref{fig:toyprefix}. Note that the empty graph $G(\emptyset)$
and the full graph $G$ are both prefix graphs.
\end{example}

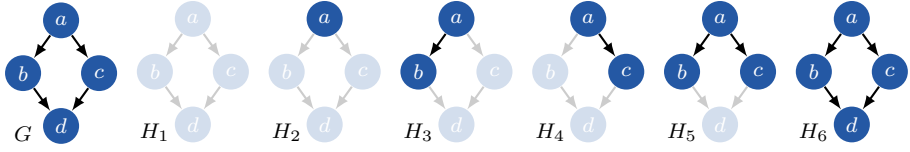
\begin{figure}[ht!]
\begin{center}
\begin{tikzpicture}
\node[exnode] (a1) {$a$};
\node[exnode] (a2) at (-0.5, -0.7) {$b$};
\node[exnode] (a3) at (0.5, -0.7) {$c$};
\node[exnode] (a4) at (0, -1.4) {$d$};
\draw[exedge] (a1) -- (a2);
\draw[exedge] (a1) -- (a3);
\draw[exedge] (a2) -- (a4);
\draw[exedge] (a3) -- (a4);
\node at (-0.5, -1.5) {$G$};
\end{tikzpicture}\hfill
\begin{tikzpicture}
\node[exnode, fill = yafcolor5!20] (a1) {$a$};
\node[exnode, fill = yafcolor5!20] (a2) at (-0.5, -0.7) {$b$};
\node[exnode, fill = yafcolor5!20] (a3) at (0.5, -0.7) {$c$};
\node[exnode, fill = yafcolor5!20] (a4) at (0, -1.4) {$d$};
\draw[exedge, black!20] (a1) -- (a2);
\draw[exedge, black!20] (a1) -- (a3);
\draw[exedge, black!20] (a2) -- (a4);
\draw[exedge, black!20] (a3) -- (a4);
\node at (-0.5, -1.5) {$H_1$};
\end{tikzpicture}\hfill
\begin{tikzpicture}
\node[exnode] (a1) {$a$};
\node[exnode, fill = yafcolor5!20] (a2) at (-0.5, -0.7) {$b$};
\node[exnode, fill = yafcolor5!20] (a3) at (0.5, -0.7) {$c$};
\node[exnode, fill = yafcolor5!20] (a4) at (0, -1.4) {$d$};
\draw[exedge, black!20] (a1) -- (a2);
\draw[exedge, black!20] (a1) -- (a3);
\draw[exedge, black!20] (a2) -- (a4);
\draw[exedge, black!20] (a3) -- (a4);
\node at (-0.5, -1.5) {$H_2$};
\end{tikzpicture}\hfill
\begin{tikzpicture}
\node[exnode] (a1) {$a$};
\node[exnode] (a2) at (-0.5, -0.7) {$b$};
\node[exnode, fill = yafcolor5!20] (a3) at (0.5, -0.7) {$c$};
\node[exnode, fill = yafcolor5!20] (a4) at (0, -1.4) {$d$};
\draw[exedge] (a1) -- (a2);
\draw[exedge, black!20] (a1) -- (a3);
\draw[exedge, black!20] (a2) -- (a4);
\draw[exedge, black!20] (a3) -- (a4);
\node at (-0.5, -1.5) {$H_3$};
\end{tikzpicture}\hfill
\begin{tikzpicture}
\node[exnode] (a1) {$a$};
\node[exnode, fill = yafcolor5!20] (a2) at (-0.5, -0.7) {$b$};
\node[exnode] (a3) at (0.5, -0.7) {$c$};
\node[exnode, fill = yafcolor5!20] (a4) at (0, -1.4) {$d$};
\draw[exedge, black!20] (a1) -- (a2);
\draw[exedge] (a1) -- (a3);
\draw[exedge, black!20] (a2) -- (a4);
\draw[exedge, black!20] (a3) -- (a4);
\node at (-0.5, -1.5) {$H_4$};
\end{tikzpicture}\hfill
\begin{tikzpicture}
\node[exnode] (a1) {$a$};
\node[exnode] (a2) at (-0.5, -0.7) {$b$};
\node[exnode] (a3) at (0.5, -0.7) {$c$};
\node[exnode, fill = yafcolor5!20] (a4) at (0, -1.4) {$d$};
\draw[exedge] (a1) -- (a2);
\draw[exedge] (a1) -- (a3);
\draw[exedge, black!20] (a2) -- (a4);
\draw[exedge, black!20] (a3) -- (a4);
\node at (-0.5, -1.5) {$H_5$};
\end{tikzpicture}\hfill
\begin{tikzpicture}
\node[exnode] (a1) {$a$};
\node[exnode] (a2) at (-0.5, -0.7) {$b$};
\node[exnode] (a3) at (0.5, -0.7) {$c$};
\node[exnode] (a4) at (0, -1.4) {$d$};
\draw[exedge] (a1) -- (a2);
\draw[exedge] (a1) -- (a3);
\draw[exedge] (a2) -- (a4);
\draw[exedge] (a3) -- (a4);
\node at (-0.5, -1.5) {$H_6$};
\end{tikzpicture}
\end{center}
\caption{Toy episode $G$ and all the prefix graphs $H_1, \ldots, H_6$.}
\label{fig:toyprefix}
\end{figure}

Now that we have defined the states of our machine, we can finally define the machine itself.

\begin{definition}
Given an episode $G$ we define a \emph{machine} $\mach{G}$ to be a DAG with labelled edges,
such that the states are the prefix subgraphs $\pre{G}$ and two states $H_1$
and $H_2$ are connected with an edge $(H_1, H_2)$ if we can obtain $H_1$ by
deleting a (sink) vertex from $H_2$. The label of the edge is the label of the deleted vertex.

The \emph{source state} of $\mach{G}$ is the empty prefix graph $G(\emptyset)$, while
the \emph{sink state} is the episode $G$ itself.
\end{definition}

\begin{example}
Consider an episode $G$ given in Figure~\ref{fig:toyprefix}. This episode has
$6$ prefix graphs given also in Figure~\ref{fig:toyprefix} and the machine
$\mach{G}$ is given in Figure~\ref{fig:toymachine}.
\end{example}

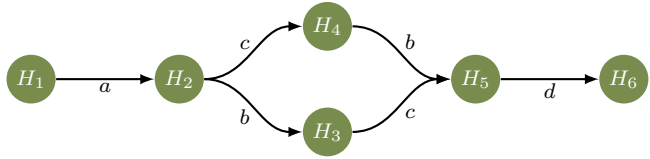
\begin{figure}[ht!]
\begin{center}
\begin{tikzpicture}[auto]
\node[machnode] (n1) {$H_1$};
\node[machnode] at (7.8*0.25, 0) (n2) {$H_2$};
\node[machnode] at (7.8*0.5, -0.7) (n3) {$H_3$};
\node[machnode] at (7.8*0.5, 0.7) (n4) {$H_4$};
\node[machnode] at (7.8*0.75, 0) (n5) {$H_5$};
\node[machnode] at (7.8, 0) (n6) {$H_6$};

\draw[machedge, in = 180, out = 0] (n1) edge node[swap, inner sep = 1pt] {$a$} (n2);
\draw[machedge, in = 180, out = 0] (n2) edge node[swap, inner sep = 1pt] {$b$} (n3);
\draw[machedge, in = 180, out = 0] (n2) edge node[inner sep = 1pt] {$c$} (n4);
\draw[machedge, in = 180, out = 0] (n3) edge node[swap, inner sep = 1pt] {$c$} (n5);
\draw[machedge, in = 180, out = 0] (n4) edge node[inner sep = 1pt] {$b$} (n5);
\draw[machedge, in = 180, out = 0] (n5) edge node[swap, inner sep = 1pt] {$d$} (n6);

\end{tikzpicture}
\end{center}
\caption{A machine $\mach{G}$ for the episode given in Figure~\ref{fig:toyprefix}.}
\label{fig:toymachine}
\end{figure}

Note that we can view $\mach{G}$ as a finite state machine with a small
technical difference. Finite state machine requires that we should specify transitions
from each state for \emph{every possible} label. We can think of $\mach{G}$ as
a finite state machine by adding self-loops for every possible missing label.
However, it is more natural to ignore these self-loops from the notation since
in practice $\mach{G}$ is implemented as a DAG.

Our next technical lemma is the key result why we are working only with strict
episodes. We will see later on how this lemma helps us with the definitions and
propositions.

\begin{lemma}
\label{lem:strict}
Let $G$ be a strict episode and let $H$ be a state in $\mach{G}$.
Each outgoing edge from $H$ has a unique label among outgoing edges.
Each incoming edge to $H$ has a unique label among incoming edges.
\end{lemma}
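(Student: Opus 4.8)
The plan is to prove the two statements (uniqueness of labels among outgoing edges, and among incoming edges) separately, in each case arguing by contradiction and invoking strictness.

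First consider outgoing edges. Suppose $H = G(W)$ has two distinct outgoing edges with the same label, say to states $H' = G(W \setminus \{v\})$ and $H'' = G(W \setminus \{w\})$ with $v \neq w$, $\lab{v} = \lab{w}$, and both $v$ and $w$ sinks of $H$. Since $G$ is strict and $v, w$ carry the same label, either $(v,w) \in E$ or $(w,v) \in E$; say $(v,w) \in E$. Because $H = G(W)$ is an induced episode and $v, w \in W$, this edge is also in $E(W)$, so $w$ cannot be a sink of $H$ — contradiction. Hence $v = w$, so outgoing edges have distinct labels.

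Next consider incoming edges. Suppose $H = G(W)$ has two distinct incoming edges with the same label, coming from $H' = G(W \cup \{v\})$ and $H'' = G(W \cup \{w\})$, where $v, w \notin W$, $v \neq w$, $\lab{v} = \lab{w}$, and $v$ is a sink of $H'$ while $w$ is a sink of $H''$. Again by strictness, either $(v,w) \in E$ or $(w,v) \in E$; say $(v,w) \in E$. I need to derive a contradiction. Since $H'' = G(W \cup \{w\})$ is a prefix subgraph, every ancestor of $w$ lies in $W \cup \{w\}$; in particular, $v$, being an ancestor of $w$ (as $(v,w)\in E$), satisfies $v \in W \cup \{w\}$, and since $v \neq w$ we get $v \in W$ — contradicting $v \notin W$. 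The case $(w,v) \in E$ is symmetric, using that $H' = G(W \cup \{v\})$ is a prefix subgraph. Hence $v = w$, so incoming edges have distinct labels.

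The main thing to be careful about is the bookkeeping of which vertex belongs to which vertex set: for the outgoing case the shared-label vertices both lie inside $W$, so the relevant fact is that an induced episode retains all edges between its vertices, forcing one of them not to be a sink; for the incoming case the shared-label vertices lie outside $W$, so the relevant fact is the prefix-closure (ancestor-closed) property of the states, which forces one vertex into $W$. No real calculation is involved — the only obstacle is correctly matching each direction of the claimed edge $(v,w)$ or $(w,v)$ with the appropriate structural property (induced vs.\ prefix-closed), and observing the symmetry so that only one of the two cases needs to be written out in full.
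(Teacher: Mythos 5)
Your argument is essentially the paper's: strictness forces an edge between the two same-labelled vertices, and the contradiction comes either from prefix-closure (a missing ancestor) or from sink-ness; the paper writes out only the prefix-closure case and declares the other ``similar'', which you supply explicitly. Two small points should be fixed. First, your edge orientation is reversed relative to the paper's definition of $\mach{G}$: an edge $(H_1,H_2)$ goes from the smaller state to the larger one ($H_1$ is obtained by deleting a sink of $H_2$), so outgoing edges from $H$ \emph{add} a vertex while incoming edges to $H$ \emph{delete} a sink of $H$; your two cases are therefore swapped in name, although together they still cover exactly the two claims of the lemma. Second, in the sink case, from $(v,w)\in E(W)$ it follows that $v$ (which now has an outgoing edge inside the induced graph) cannot be a sink of $H$, not that $w$ cannot; the contradiction survives because both $v$ and $w$ were assumed to be sinks, but the inference as written does not follow and should be corrected.
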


\begin{proof}
Assume that there are two edges $(H, F_1)$ and $(H, F_2)$ having the same
label.  This means that there are two distinct vertices $v$ and $w$ in $G$ with
the same label such that $H = F_1 \setminus v$ and $H = F_2 \setminus w$.
Since $G$ is strict, $v$ and $w$ must be connected. Assume that $(v, w) \in E(G)$.
This means that $F_2$ cannot be a prefix graph since $v$ is a parent of $w$
and is not in $V(F_2)$. This is a contradiction and shows that every outgoing edge
has a unique label. The proof for incoming edges is similar.
\qed
\end{proof}

Our next definition is a greedy function mapping a sequence and an initial
state to a final state. The final state is essentially a state that we will end
up by walking greedily the edges $\mach{G}$.

\begin{definition}
Given a machine $M = \mach{G}$ for a strict episode $G$, a state $H$ in $M$, and a sequence $S = s_1, \ldots, s_n$, we
define $\greedy{M, S, H}$ to be the state to which $s$ leads $M$ from $H$,
that is, we can define $\greedy{M, S, H}$ recursively by first defining $\greedy{}$
for the empty sequence, $\greedy{M, \emptyset, H} = H$, and then more generally,
for $i = 1, \ldots, n$,
\[
	\greedy{M, s_i, \ldots, s_n, H} = \greedy{M, s_{i + 1}, \ldots, s_n, F}
\]
if $(H, F) \in E(M)$ with a label $s_i$, and
\[
	\greedy{M, s_i, \ldots, s_n, H} = \greedy{M, s_{i + 1}, \ldots, s_n, H},
\]
otherwise.
\end{definition}

We will abbreviate $\greedy{M, S, G(\emptyset)}$ by $\greedy{M, S}$.

Note that this definition is only well defined if $\mach{G}$ has unique
outgoing edges.  Lemma~\ref{lem:strict} guarantees this since $G$ is a strict
episode. 

\begin{example}
Consider $M = \mach{G}$ given in Figure~\ref{fig:toymachine}. Then, for example,
\[
	\greedy{M, adc} = \greedy{M, adc, H_1} = H_4\quad\text{and}\quad
	\greedy{M, bcd, H_2} = H_6\quad.
\]
\end{example}

One of the reasons we defined $\mach{G}$ is the fact that we can use this to
detect when a sequence is covering $G$. First let us define the
coverage for a \emph{state} in $\mach{G}$.

\begin{definition}
We say that a sequence $S$ covers a state $H$ in $\mach{G}$ if there is a
subsequence $T$ of $S$ leading from the source state to $H$, that is,
\[
	\greedy{M, T} = H\quad.
\]
\end{definition}

As expected, covering an episode $G$ and the sink state in $\mach{G}$ are
closely related. 

\begin{proposition}[Proposition 1 in~\citep{tatti:14:mining}]
\label{prop:subsequence}
Sequence $S$ covers an episode $G$ if and only if $S$ covers the sink state in $\mach{G}$.
\end{proposition}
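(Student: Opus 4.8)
The plan is to prove both directions by relating a covering mapping $m$ for $G$ (Definition~\ref{def:cover}) to a subsequence $T$ of $S$ that drives the machine from the source state $G(\emptyset)$ to the sink state $G$. The key structural fact to exploit is that the edges of $\mach{G}$ correspond exactly to deleting a sink vertex, so a path from the source to the sink state is precisely a reverse topological enumeration $v_1, \ldots, v_k$ of $V(G)$: at each step the vertex we add must have all its parents already present (because the current state is a prefix graph), and the label read along that edge is $\lab{v_i}$. Thus a path from source to sink, together with a sequence of positions at which the transitions fire, is the same data as an injective label-preserving mapping that respects the edges of $G$ — which is exactly a cover.

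For the forward direction, suppose $S$ covers $G$ via an injective mapping $m : V(G) \to \{1, \ldots, n\}$ honoring labels and edges. Order the vertices $v_1, \ldots, v_k$ by increasing value of $m(v_i)$; since $m$ honors edges, $(v_i, v_j) \in E$ forces $m(v_i) < m(v_j)$, so this order is a topological order of $G$, hence $v_k, \ldots, v_1$ is a reverse topological order and $G(\{v_1, \ldots, v_i\})$ is a prefix graph for each $i$. Let $T$ be the subsequence $s_{m(v_1)}, \ldots, s_{m(v_k)}$ of $S$. I would then show by induction on $i$ that after reading the first $i$ symbols of $T$ the greedy walk $\greedy{\mach{G}, \cdot}$ has reached the state $G(\{v_1, \ldots, v_i\})$: the symbol $s_{m(v_i)} = \lab{v_i}$ is read, and since all parents of $v_i$ lie among $v_1, \ldots, v_{i-1}$, there genuinely is an edge labeled $\lab{v_i}$ from the current state to $G(\{v_1, \ldots, v_i\})$, so the greedy walk takes it. After $k$ steps we are at $G$, so $\greedy{\mach{G}, T} = G$, i.e. $S$ covers the sink state.

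For the converse, suppose $S$ covers the sink state, so there is a subsequence $T = s_{i_1}, \ldots, s_{i_k}$ (with $i_1 < \cdots < i_k$) and $\greedy{\mach{G}, T} = G$. Here I must be a little careful: greedy evaluation on $T$ may actually take fewer than $k$ transitions if some symbol of $T$ does not match any outgoing edge at the current state. But the walk visits a chain of states $G(\emptyset) = H^{(0)} \subseteq H^{(1)} \subseteq \cdots \subseteq H^{(k)} = G$, and each step that does change the state adds exactly one vertex whose label equals the symbol read; restricting to the $k$ symbols at which the state strictly grows (there are exactly $k = |V(G)|$ of them, since the state must grow from the empty graph to all of $G$ one vertex at a time) gives an increasing list of positions $m(v_1), \ldots, m(v_k)$ and an enumeration $v_1, \ldots, v_k$ of $V(G)$. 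Define $m(v_i)$ to be that position; $m$ is injective because the positions are distinct, it honors labels because each transition reads $\lab{v_i}$, and it honors edges because at the moment $v_i$ is added the state $H^{(i-1)}$ is a prefix graph containing every parent of $v_i$, so any $(v_j, v_i) \in E$ has $j < i$ and hence $m(v_j) < m(v_i)$. Thus $S$ covers $G$.

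The main obstacle is the bookkeeping in the converse direction: making precise that the greedy walk, even though it may idle on non-matching symbols, must in total perform exactly $|V(G)|$ state-changing transitions to get from the empty prefix graph to $G$, and that these transitions visit a strictly increasing chain of prefix graphs. This relies on the machine being a DAG in which every edge adds exactly one (sink) vertex, and it is where Lemma~\ref{lem:strict} — uniqueness of outgoing labels, guaranteeing $\greedy{}$ is well defined so that "the" walk on $T$ makes sense — is implicitly used. Everything else is a straightforward induction. (As the statement notes, this is Proposition~1 of~\citep{tatti:14:mining}, so a short proof along these lines, or a citation, suffices.)
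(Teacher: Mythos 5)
Your proof is correct. Both directions are sound: in the forward direction you correctly use that ordering the vertices by the embedding $m$ gives a linear extension of $G$, so each set $\set{v_1, \ldots, v_i}$ induces a prefix graph and the machine has the required edge, with Lemma~\ref{lem:strict} guaranteeing the greedy walk actually takes it; in the converse you correctly extract the $\abs{V(G)}$ state-changing transitions of the walk on the subsequence $T$, and the prefix-graph property of the intermediate states is exactly what makes the resulting mapping honor the edges of $G$.

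For comparison: the paper does not actually prove this proposition at all --- it is imported as Proposition~1 of the cited prior work, and the only proof given in the appendix is for the stronger Proposition~\ref{prop:greedy} (that one may take $T = S$ itself), which is argued by induction on the sequence length using a lemma that peels off the first event. So your argument is not "the paper's proof rephrased" but a self-contained direct proof of the weaker statement, built on the correspondence between source-to-sink walks in $\mach{G}$ and label-preserving, edge-honoring embeddings. That correspondence is exactly the right mechanism, and your explicit handling of the idle (non-matching) symbols in the converse is the one bookkeeping point that genuinely needs care; you resolve it correctly by counting the strict-growth steps of the chain of prefix graphs. Note that your argument does not (and need not) use Proposition~\ref{prop:greedy}; conversely, it does not yield that stronger statement, which is why the paper still needs its separate inductive proof.
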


The technical difficulty with using the definition of coverage is that we
need to \emph{find a subsequence} that travels from the source state to the
sink state.  Fortunately, the next result states that we can simply use the
whole sequence.

\begin{proposition}[Corollary 1 in~\citep{tatti:14:mining}]
\label{prop:greedy}
Sequence $S$ covers the sink state in $\mach{G}$ if and only if $\greedy{M, S} = G$.
\end{proposition}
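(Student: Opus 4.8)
The plan is to prove the two implications separately. The forward implication is immediate: if $\greedy{M, S} = G$, then $S$ is itself a subsequence of $S$ witnessing that $S$ covers the sink state $G$, directly from the definition of a sequence covering a state.

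For the converse, suppose $S$ covers the sink state, so there is a subsequence $T$ of $S$ with $\greedy{M, T} = \greedy{M, T, G(\emptyset)} = G$. I would establish the following monotonicity property of the greedy walk: for all states $H_1, H_2$ of $\mach{G}$ with $V(H_1) \subseteq V(H_2)$ and every subsequence $T$ of $S$, one has $V(\greedy{M, T, H_1}) \subseteq V(\greedy{M, S, H_2})$. Applying this with $H_1 = H_2 = G(\emptyset)$ gives $V(G) = V(\greedy{M, T}) \subseteq V(\greedy{M, S})$; since $\greedy{M, S}$ is a prefix graph of $G$, its vertex set is contained in $V(G)$, so $\greedy{M, S} = G$, as desired.

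The monotonicity property I would prove by induction on $\abs{S}$. The case $S = \emptyset$ forces $T = \emptyset$, and both sides equal their respective starting states. For the inductive step write $S = s_1 S'$. Either $T$ is already a subsequence of $S'$, or $T = s_1 T'$ with $T'$ a subsequence of $S'$. In the first case, let $F_2$ be the state reached from $H_2$ on reading $s_1$; since an edge of $\mach{G}$ always increases the vertex set, $V(H_1) \subseteq V(H_2) \subseteq V(F_2)$, and the inductive hypothesis for $S'$, $T$, $H_1 \subseteq F_2$ finishes it. In the second case, let $F_1$ and $F_2$ be the states reached from $H_1$ and $H_2$ on reading $s_1$; by the inductive hypothesis for $S'$, $T'$ it suffices to show $V(F_1) \subseteq V(F_2)$. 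If $H_1$ has no outgoing edge labelled $s_1$, then $F_1 = H_1 \subseteq H_2 \subseteq F_2$. Otherwise $F_1 = H_1 \cup \set{v}$ where $v$ is a vertex with $\lab{v} = s_1$ all of whose parents lie in $V(H_1)$. If $v \in V(H_2)$ then $V(F_1) \subseteq V(H_2) \subseteq V(F_2)$. If $v \notin V(H_2)$, then all parents of $v$ lie in $V(H_1) \subseteq V(H_2)$, so $H_2 \cup \set{v}$ is ancestor-closed; moreover $v$ has no child in $V(H_2)$ (otherwise $v \in V(H_2)$ by ancestor-closedness of $H_2$), so $(H_2, H_2 \cup \set{v})$ is an edge of $\mach{G}$ with label $s_1$. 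By Lemma~\ref{lem:strict} it is the unique outgoing $s_1$-edge from $H_2$, so the greedy walk takes it and $F_2 = H_2 \cup \set{v} \supseteq H_1 \cup \set{v} = F_1$.

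The main obstacle is exactly this last step: arguing that whenever the greedy walk on the short subsequence can advance along an $s_1$-edge, the greedy walk on the long sequence --- sitting at a possibly larger state --- also advances, and to a compatible (larger) state. This is precisely where strictness is needed, through the uniqueness of outgoing edge labels in Lemma~\ref{lem:strict}; without it the walk on $S$ could be forced onto a different $s_1$-edge and leave the part of the machine that leads to $G$. The remaining ingredients are routine facts about prefix graphs, namely that ancestor-closedness is preserved both when deleting a sink vertex and when adjoining a vertex all of whose parents are already present.
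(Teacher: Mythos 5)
Your proof is correct, but it takes a genuinely different route from the paper's. The paper proves the nontrivial direction by induction on the length of $S$, peeling off the first event $s_1$: it invokes an auxiliary lemma stating that if $S$ covers $G$ then the tail $s_2,\ldots,s_n$ covers either $G$ or $G \setminus v$ (when $s_1$ matches a source vertex $v$), and then observes that the part of $\mach{G}$ lying above the state $G(v)$ is exactly $\mach{G \setminus v}$, so the induction can be applied to the smaller episode; implicitly it also relies on Proposition~\ref{prop:subsequence} to pass between ``covers the sink state'' and ``covers $G$''. You instead stay entirely inside the machine: you prove a domination invariant --- for states with $V(H_1) \subseteq V(H_2)$ and any subsequence $T$ of $S$, $V(\greedy{M, T, H_1}) \subseteq V(\greedy{M, S, H_2})$ --- by induction on $\abs{S}$, with strictness (Lemma~\ref{lem:strict}) entering exactly where it should: when the small walk advances by adding a vertex $v$ with $\lab{v} = s_1$ not yet in $H_2$, ancestor-closedness gives the edge $(H_2, H_2 \cup \set{v})$ and uniqueness of outgoing labels forces the greedy walk on $S$ to take it. Your route buys a reusable monotonicity property of greedy parsing and avoids any appeal to the embedding characterization of coverage or to Proposition~\ref{prop:subsequence}, at the cost of a somewhat longer case analysis; the paper's route is shorter given its (unproven) cover-preservation lemma and the sub-machine isomorphism, and keeps the argument tied to the combinatorial definition of covering an episode. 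Both are valid proofs of the statement.
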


For the sake of completeness we provide the proof in the appendix.

\begin{example}
Consider $G$ given in Figure~\ref{fig:toyprefix} and its corresponding machine $M =
\mach{G}$ given in Figure~\ref{fig:toymachine}. Sequence $S = aebfcd$ covers
$G$. Proposition~\ref{prop:subsequence} implies that there is a subsequence of $S$,
say $T$, such that $\greedy{M, T} = H_6$ and Proposition~\ref{prop:greedy} makes
a stronger claim that one can choose $T = S$. By applying the definition of the greedy
function, we can easily verify that indeed $\greedy{M, S} = H_6$.
\end{example}

We should point out that this does not hold for a general finite state machine,
however, this holds for any $\mach{G}$.

\subsection{Independence model}\label{sec:indmodel}

Our next step is to compute the expected support.  Here we use the results from
the previous section, by computing the probability that a random sequence
reaches the sink state.

We will use the following notation.

\begin{definition}
Let $M = \mach{G}$ be a machine and let $H$ be a state.
Let $S = s_1, \ldots, s_n$ be a random sequence of $n$ events, generated independently.
Define
\[
	\pind{H, n} = p(\greedy{M, S} = H)
\]
to be the probability that $S$ leads to $H$ from the source state.
\end{definition}

In other words, the probability that a sequence of $n$ events covers $G$ is
equal to $\pind{G, n}$.

We can now compute the probability recursively using the following proposition.

\begin{proposition}
\label{prop:indcascade}
Let $M = \mach{G}$ be a machine and let $H$ be a state.
Let $S = s_1, \ldots, s_n$ be a random sequence of $n$ events, generated independently.
Then the probability of $\greedy{M, S} = H$ is equal to
\[
	\pind{H, n} =  q \times \pind{H, n - 1}  
	+ \sum_{e = (F, H) \in E(M)} p(\lab{e}) \pind{F, n - 1},
\]
where $q$ is the probability of being stuck in $H$ for a single event
\[
	q = 1 - \sum_{e = (H, F) \in E(M)} p(\lab{e}) \quad.
\]
\end{proposition}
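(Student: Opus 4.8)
The plan is to condition on the last event $s_n$ of the random sequence and use the recursive structure of the greedy function. Write $S = s_1, \ldots, s_n$ and let $S' = s_1, \ldots, s_{n-1}$ be its prefix of length $n-1$. Since events are generated independently, $S'$ is a random sequence of $n-1$ events with the same distribution and $s_n$ is independent of $S'$. The key observation is that, by the recursive definition of $\greedy{}$ applied to the \emph{last} event, the state $\greedy{M, S, G(\emptyset)}$ is obtained from the intermediate state $H' = \greedy{M, S', G(\emptyset)}$ by taking one more greedy step on the single event $s_n$: if there is an outgoing edge $(H', F)$ labelled $s_n$, then $\greedy{M, S} = F$, and otherwise $\greedy{M, S} = H'$. (Strictly, the definition in the excerpt recurses on the first event, but the same one-step decomposition holds at the tail; alternatively one notes that the greedy walk processes events in order, so the final state is a deterministic function of the penultimate state and $s_n$.)

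From here I would enumerate the two ways the event $\greedy{M, S} = H$ can occur. Either (i) $H' = H$ already and the last event $s_n$ does \emph{not} trigger any outgoing edge from $H$, or (ii) $H' = F$ for some state $F$ with an edge $e = (F, H) \in E(M)$, and $s_n = \lab{e}$. These cases are mutually exclusive and exhaustive: in case (ii) the walk genuinely moves along $e$, which is only possible if $H' = F$ and the event is the edge label. The probability of case (i) is $\pind{H, n-1}$ times the probability that a single independently generated event is not the label of any outgoing edge of $H$, which is exactly $q = 1 - \sum_{(H,F) \in E(M)} p(\lab{e})$. The probability of case (ii), summed over the relevant edges, is $\sum_{e = (F,H) \in E(M)} p(\lab{e}) \pind{F, n-1}$, using independence of $s_n$ from $S'$. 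Adding the two gives the claimed identity.

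The main subtlety — and the only place where some care is needed — is the disjointness and well-definedness of case (ii): I must make sure that distinct incoming edges to $H$ cannot both ``fire'' on the same event, and that the decomposition does not double-count. This is precisely where Lemma~\ref{lem:strict} is used: each incoming edge to $H$ carries a distinct label, so for a fixed value of $s_n$ at most one incoming edge $e = (F, H)$ is relevant, and the corresponding source $F$ is uniquely determined. Without strictness one could have two edges $(F_1, H)$ and $(F_2, H)$ with the same label, and the sum $\sum_e p(\lab{e}) \pind{F, n-1}$ would overcount the event $\{H' \in \{F_1, F_2\},\ s_n = \text{that label}\}$. Likewise, strictness guarantees that the ``stuck'' probability $q$ in case (i) is computed correctly, since the outgoing labels of $H$ are distinct and $q \ge 0$. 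Once this bookkeeping is in place, the identity is immediate by the law of total probability over the value of $H'$ and of $s_n$.
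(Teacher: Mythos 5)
Your proposal is correct and takes essentially the same route as the paper, which proves the more general Proposition~\ref{prop:partcascade} (of which this statement is declared a special case) by exactly this argument: condition on the state $F = \greedy{M, s_1, \ldots, s_{n-1}}$ reached after the first $n-1$ events and split into the stay-put case (probability $q$) and the incoming-edge case. The only small inaccuracy is in your discussion of the subtlety: disjointness in case (ii) already follows from distinct incoming edges of $H$ having distinct source states, so incoming-label uniqueness is not what is needed; what strictness (via the outgoing-label half of Lemma~\ref{lem:strict}) actually buys is that the greedy step is well defined and that, from $F$ with $s_n = \lab{e}$ for $e = (F,H)$, the walk necessarily moves to $H$ and nowhere else.
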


This proposition is a special case of Proposition~\ref{prop:partcascade},
hence we will omit the proof. 

If we write $M_{\mathit{ind}}$ to be the independence model, we define
\[
	\rind{G} = \rank{G \mid M_{\mathit{ind}}}\quad.
\]
To compute this rank we need to compute the probability that a random sequence
of length $k$ covers episode $G$. This is exactly what Proposition~\ref{prop:indcascade} does.

\begin{example}
Assume that the alphabet consists of 5 labels and 
the probabilities for labels are $p(a) = 0.4$, $p(b) = 0.3$, $p(c) = 0.2$, $p(d) = 0.06$, and $p(e) = 0.04$.
Consider $M$ given in Figure~\ref{fig:toymachine}.
The initial probabilities are 
\[
	\pind{H_1, 0} = 1, \quad\pind{H_j, 0} = 0, \quad\text{for}\quad j = 2, \ldots, 6\quad.
\]
According to Proposition~\ref{prop:indcascade} the probabilities are
\[
\begin{split}
	\pind{H_1, n + 1} & = 0.6\pind{H_1, n},\\
	\pind{H_2, n + 1} & = 0.5 \pind{H_2, n} + 0.4\pind{H_1, n}, \\
	\pind{H_3, n + 1} & = 0.8 \pind{H_3, n} + 0.3\pind{H_2, n},  \\
	\pind{H_4, n + 1} & = 0.7 \pind{H_4, n} + 0.2\pind{H_2, n},\\
	\pind{H_5, n + 1} & = 0.94 \pind{H_5, n} + 0.2\pind{H_3, n} + 0.3\pind{H_4, n}, \\
	\pind{H_6, n + 1} & = 0.06\pind{H_5, n} + \pind{H_6, n}\quad.
\end{split}
\]
\end{example}

\section{Partition model for episodes}\label{sec:partmodel}

Consider an episode $G$ given in Figure~\ref{fig:toyboost} and its machine
$\mach{G}$. Assume that $b$ has tendency to occur soon after $a$ but $c$ is a
freerider: its occurrence is independent of vicinity of $a$ and $b$.
This episode will have a high rank because its support is higher than what
independence model predicts. The reason for this is that $b$ occurs more often
than expected after $a$, that is, we will move sooner from state $H_2$ to $H_3$
and from $H_5$ to $H_6$ sooner than expected.

\begin{figure}[ht!]
\hfill
\begin{tikzpicture}
\node[exnode] (a1) {$a$};
\node[exnode] (a2) at (0, -1) {$b$};
\node[exnode] (a3) at (1, 0) {$c$};
\node at (-0.3, -1.5) {$G$};
\draw[exedge] (a1) -- (a2);
\end{tikzpicture}
\hfill
\begin{tikzpicture}[auto]
\node at (0, -1.2) {$\mach{G}$};
\node[machnode] (n1) {$H_1$};
\node[machnode] at (7.8*0.25, 0) (n2) {$H_2$};
\node[machnode] at (7.8*0.5, 0) (n3) {$H_3$};

\node[machnode] at (7.8*0.125, -1.2) (n4) {$H_4$};
\node[machnode] at (7.8*0.375, -1.2) (n5) {$H_5$};
\node[machnode] at (7.8*0.625, -1.2) (n6) {$H_6$};

\draw[machedge] (n1) edge node[inner sep = 1pt] {$a$} (n2);
\draw[machedge] (n2) edge node[inner sep = 1pt] {$b$} (n3);

\draw[machedge] (n1) edge node[inner sep = 1pt] {$c$} (n4);
\draw[machedge] (n2) edge node[inner sep = 1pt] {$c$} (n5);
\draw[machedge] (n3) edge node[inner sep = 1pt] {$c$} (n6);

\draw[machedge] (n4) edge node[swap, inner sep = 1pt] {$a$} (n5);
\draw[machedge] (n5) edge node[swap, inner sep = 1pt] {$b$} (n6);
\end{tikzpicture}\hspace*{\fill}
\caption{Toy episode and its machine}
\label{fig:toyboost}
\end{figure}
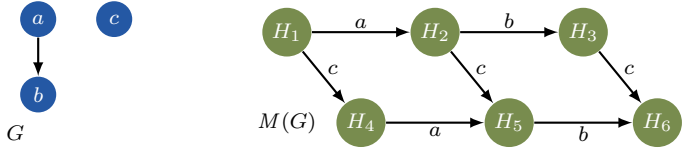

Our goal is to construct a more flexible model that would take into account
that some of the transitions are more probable than what the independence model
predicts. This will allow us to remove the freeriders.

In order to do that let us fix an episode $G$ and assume that we are given two
disjoint subsets of edges $C_1 \subset E(\mach{G})$ and $C_2 \subset
E(\mach{G})$. Note that (both of) these sets can be empty and it is possible that $C_1 \cup C_2 \neq  E(\mach{G})$.
We will describe later on how we select these sets
but for now we will assume that they are given. Also, we can easily define this model for $k$
sets but we only need two sets.

Our model has $\abs{\Sigma} + 2$ parameters: $\abs{\Sigma}$ parameters $u_l$ states
the likelihood of a label $l$. The larger $u_l$, the more likely $l$ is
to occur in a sequence. In addition, we have two transition parameters.
Parameter $t_1$ states how likely we use an edge in $C_1$ while
$t_2$ states how likely we use an edge in $C_2$. 

In order to define the actual model let us first define the conditional probability
of a label \emph{given} a state $H$, 
\[
	p(l \mid H) =
\begin{cases}
\frac{1}{Z_H}\exp\pr{u_l + t_1}, & \text{if there is } (H, F) \in C_1 \text{ and } \lab{(H, F)} = l, \\
\frac{1}{Z_H}\exp\pr{u_l + t_2}, & \text{if there is } (H, F) \in C_2 \text{ and } \lab{(H, F)} = l, \\
\frac{1}{Z_H}\exp\pr{u_l}, &  \text{otherwise},
\end{cases}
\]
where $Z_H$ is a normalization constant guaranteeing that $\sum_l p(l \mid H) = 1$.
Note that $Z_H$ depends on $H$ while $u_l$ and $t_1$ and $t_2$ do not.

This probability implies that the labels with large $u_l$ are more likely to occur.
Moreover, if there is an edge $(H, F) \in C_1$, then the probability of generating
the label of the edge is increased due to $t_1$ (and similarly for $C_2$).

Note that this is well defined because Lemma~\ref{lem:strict} states that labels
for outgoing edges are unique.

We select $u_l$ and $t_i$ by optimizing the likelihood of a sequence.
In order to do this, we first new to define the probability of a sequence. Let us first decompose it
into conditional probabilities,
\[
	p(S) = \prod_{i = 1}^n p(s_i \mid s_1, \ldots, s_{i - 1})\quad.
\]
We define the probability of $s_i$ to be
\[
	p(s_i \mid s_1, \ldots, s_{i - 1}) = p(s_i \mid H),
\]
where $H$ is the state given by the greedy function, 
\[
	H = \greedy{M, (s_1, \ldots, s_{i - 1})}\quad.
\]
In other words, $s_i$ is generated from $p(\cdot \mid H)$, where $H$ is the
current state led by $s_1, \ldots, s_{i - 1}$.

Note that if $C_1 = C_2 = \emptyset$, then $p(l \mid H) = p(l)$, and the model
is in fact the independence model. However, if $C_1$ and $C_2$ are not empty,
certain labels are expected to occur more often\footnote{or more rarely if $t_i$ are
small.} depending on the current state of $\mach{G}$.

Our next step is to compute the probability of a sequence covering an episode.
To that end,
let us define $\pprt{H, n}$ to the probability
according to the partition model that a random sequence of length $n$ reaches $H$.
The following proposition, a
generalization of Proposition~\ref{prop:indcascade}, allows to compute the
expected support.

\begin{proposition}
\label{prop:partcascade}
Let $M = \mach{G}$ be a machine and let $H$ be a state.
Let $S = s_1, \ldots, s_n$ be a random sequence of $n$ events, generated independently.
Then the probability of $\greedy{M, S} = H$ is equal to
\[
	\pprt{H, n} = q \times \pprt{H, n - 1}  + \sum_{e = (F, H) \in E(M)} p(\lab{e} \mid F) \pprt{F, n - 1},
\]
where $q$ is the probability of being stuck in $H$ for a single event
\[
	q = 1 - \sum_{e = (H, F) \in E(M)} p(\lab{e} \mid H) \quad.
\]
\end{proposition}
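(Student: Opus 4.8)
The plan is to establish the recursion for every $n \ge 1$ by conditioning on the random prefix $s_1, \ldots, s_{n-1}$, i.e., by peeling off the \emph{last} event rather than the first; the base case $n = 0$ is just the initialization $\pprt{G(\emptyset), 0} = 1$ and $\pprt{H, 0} = 0$ for $H \neq G(\emptyset)$.

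First I would record an auxiliary fact about the greedy walk: appending one event to a sequence amounts to performing a single transition step from the state already reached. Concretely, writing $F = \greedy{M, s_1, \ldots, s_{n-1}}$, we have $\greedy{M, s_1, \ldots, s_n} = F'$ if $(F, F') \in E(M)$ with $\lab{(F, F')} = s_n$, and $\greedy{M, s_1, \ldots, s_n} = F$ otherwise. This follows by a straightforward induction on $n$ from the recursive (peel-the-first-event) definition of $\greedy{}$; the only thing to check is that the edge used at the first step is well defined and does not depend on the suffix, which holds by Lemma~\ref{lem:strict}.

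Next I would fix a state $F$ and determine when a single step from $F$ on an event $x$ lands in $H$. By the auxiliary fact, this happens in exactly one of two mutually exclusive ways: either $F = H$ and $x$ is not the label of any outgoing edge of $H$, or $(F, H) \in E(M)$ and $x = \lab{(F, H)}$. The two cases are disjoint because $\mach{G}$ is a DAG and hence has no self-loop at $H$. Since, by the definition of the partition model, $s_n$ is drawn from $p(\cdot \mid \greedy{M, s_1, \ldots, s_{n-1}})$, the conditional probability of the first case given that the prefix reaches $H$ equals $1 - \sum_{(H,F) \in E(M)} p(\lab{(H,F)} \mid H) = q$ --- here Lemma~\ref{lem:strict} is needed so that the outgoing labels of $H$ are pairwise distinct and this sum does not overcount --- and the conditional probability of the second case given that the prefix reaches $F$ equals $p(\lab{(F, H)} \mid F)$, again using Lemma~\ref{lem:strict} to know that $(F, H)$, when it exists, is the unique incoming edge to $H$ carrying that label.

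Finally I would assemble the identity by the law of total probability over the state $F = \greedy{M, s_1, \ldots, s_{n-1}}$ reached by the prefix, recalling that $P(F = F_0) = \pprt{F_0, n-1}$ by definition: the only states $F$ contributing are $F = H$, which yields the term $q \times \pprt{H, n-1}$, and the tails of the incoming edges $e = (F, H) \in E(M)$, which yield $\sum_{e = (F,H) \in E(M)} p(\lab{e} \mid F)\, \pprt{F, n-1}$; the DAG property again guarantees these contributions do not overlap. This is exactly the claimed formula. The main obstacle is not a hard estimate but the bookkeeping: proving the last-event peeling lemma cleanly, and invoking Lemma~\ref{lem:strict} correctly at the two points where disjointness of edge labels is required.
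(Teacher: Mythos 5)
Your proof is correct and follows essentially the same route as the paper's: the paper also conditions on $F = \greedy{M, s_1, \ldots, s_{n-1}}$ and splits into the case $F = H$ (probability $q$ of staying) versus $F$ a parent of $H$ with $s_n$ the connecting label. You merely make explicit the last-event peeling step and the two invocations of Lemma~\ref{lem:strict} that the paper leaves implicit.
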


The proof of this proposition is given in the appendix.

Our final step is to find the parameters $\set{u_i}$, $t_1$, and $t_2$ of the model.
Here we select the parameters optimizing the likelihood of $\mathcal{S}$
\[
	p(\mathcal{S}) = \prod_{i = 1}^m p(S_i),
\]
that is we assume that each sequence in $\mathcal{S}$ is generated independently.
Unlike with the independence model we do not
have a closed solution. However, we can show that the likelihood is a concave
function of $u_l$, $t_1$ and $t_2$.

\begin{proposition}
\label{prop:concave}
$\log p(\mathcal{S})$ is a concave function of the model parameters $\set{u_i}$, $t_1$ and $t_2$.
\end{proposition}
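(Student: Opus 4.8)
The plan is to write $\log p(\mathcal{S})$ explicitly as a sum of terms, each of which is either an affine function of the parameters or a log-sum-exp of affine functions, and then invoke the standard fact that log-sum-exp is convex (so its negative is concave) and that affine functions and nonnegative sums preserve concavity. First I would fix the notation: write $\theta = (\{u_l\}_{l\in\Sigma}, t_1, t_2)$ for the parameter vector. From the definition of $p(S)$ as a product of conditional probabilities $p(s_i \mid H_i)$, where $H_i = \greedy{M,(s_1,\dots,s_{i-1})}$ is determined purely by the prefix of $S$ (and hence does not depend on $\theta$), we get
\[
\log p(\mathcal{S}) = \sum_{i=1}^m \sum_{j=1}^{\abs{S_i}} \log p\pr{ (S_i)_j \mid H_{i,j} }.
\]
The key point is that the states $H_{i,j}$ appearing here are data-determined constants, not functions of $\theta$.

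Next I would expand a single term $\log p(l \mid H)$ using the definition of the model. By construction $p(l \mid H) = \exp(a_{l,H}^\top \theta) / Z_H$, where $a_{l,H}$ is a fixed $0/1$ vector picking out $u_l$ plus possibly $t_1$ or $t_2$ (depending on whether the outgoing edge from $H$ with label $l$ lies in $C_1$ or $C_2$), and
\[
Z_H = \sum_{l'\in\Sigma} \exp\pr{ a_{l',H}^\top \theta }.
\]
Here I would note that Lemma~\ref{lem:strict} is exactly what makes each $a_{l,H}$ well defined: a given label $l$ labels at most one outgoing edge of $H$, so it is unambiguous whether that edge is in $C_1$, in $C_2$, or in neither, and the three cases in the definition of $p(l\mid H)$ do not overlap. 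Thus
\[
\log p(l \mid H) = a_{l,H}^\top \theta - \log \sum_{l'\in\Sigma} \exp\pr{ a_{l',H}^\top \theta }.
\]
The first part is affine (hence concave) in $\theta$; the second part is $-\operatorname{logsumexp}$ of affine functions, and since $\operatorname{logsumexp}$ is convex, its negative is concave. Summing over all $i,j$ — a nonnegative (in fact unit-coefficient) combination of concave functions — yields that $\log p(\mathcal{S})$ is concave.

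The only genuine content beyond bookkeeping is the convexity of $\operatorname{logsumexp}$, which is classical (it follows, e.g., from the fact that its Hessian is $\operatorname{diag}(p) - pp^\top$, a covariance matrix, hence positive semidefinite, where $p$ is the softmax vector). I would either cite this or include the one-line Hessian argument. The main obstacle is not any hard inequality but rather making airtight the claim that the sequence of states $H_{i,j}$ is independent of $\theta$ — i.e., that $\greedy{M,\cdot}$ depends only on the observed sequence and the structure of $\mach{G}$, never on the probabilities. This is immediate from the definition of $\greedy{}$ (it branches on whether an edge with a given label exists, not on its probability), so once stated carefully the argument is complete.
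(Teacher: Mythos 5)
Your proposal is correct and follows essentially the same route as the paper: both decompose $\log p(\mathcal{S})$ into per-event terms $\log p(l \mid H)$ with the states fixed by the data, recognize each term as a log-linear (softmax) function of the parameters via indicator/selector vectors (with Lemma~\ref{lem:strict} ensuring well-definedness), and conclude by the classical concavity of such terms --- the paper cites the log-linear model result of Kullback (its Proposition~\ref{prop:loglinear}), which is the same fact as your log-sum-exp convexity argument. The only cosmetic difference is that the paper groups the terms by state $H$ (which it reuses later for the gradient and Hessian computations), while you sum position by position.
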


The proof of this proposition is given in the appendix.

The concavity allows us to use gradient methods to find the local maximum which
is guaranteed to be also the global maximum. We used Newton-Raphson method to
find the optimal solution. The technical details for computing the descent are
given in Appendix~\ref{sec:descent}.

\begin{example}
Consider a serial episode $G = a \to b \to c \to x$. Assume that there are
no gap events between $a$ and $b$, and $b$ and $c$, and $x$ occurs
independently of other events.

In such case, the independence model will overestimate the the sizes of gaps
between $a$ and $b$, and $b$ and $c$. This leads to underestimating the
probability $G$ occurring in a sequence of a given length, which ultimately
leads to underestimating the support of $G$.

On the other hand, let us set $C_1 = \set{(a, a \to b), (a \to b, a \to b \to
c)}$ and $C_2 = \emptyset$.  Then the maximum likelihood solution will have
$t_1 = \infty$. This means that the partition model will never generate a gap
event between $a$, $b$, and $c$, which implies an increase in the expected
support.  In fact, since we $x$ assume that $x$ is independent of $a$, $b$, and
$c$, the partition model corresponds exactly the generating model, and
consequently the estimate of the support is unbiased.
\end{example}

\section{Which partition models to use?}\label{sec:select}

Now that we have defined our model for ranking episodes, our next step is to consider
which models to use. That is, how to select $C_1$ and $C_2$. Here we consider two approaches.
In the first approach we consider a partition model rising from a prefix graph and in the
second approach we consider a model rising from a superepisode. Finally, we combine both
of these approaches in Section~\ref{sec:partrank} by selecting the model providing the
best explanation for the support.

\subsection{Partition model from prefix graphs}\label{sec:sub}

Now that we have defined our model, our next step is to select which
transitions in $\mach{G}$ we should boost, that is, how to select
$C_1$ and $C_2$.

We consider two approaches. The first approach, described in this section, is
to divide the episode into two subepisodes. The second that is based on considering
superepisodes will be described in the next section.

Informally, our idea is to consider a prefix graph $H$ of $G$. Every vertex
in $H$ corresponds to possibly several edges in $M(G)$. This will give us the first set
of edges $C_1$. These transitions determine
the occurrence of $H$ in a sequence. The other set of edges, $C_2$, is given by the vertices 
outside $H$.

In order to define this formally, let us first define the set of edges in
$\mach{G}$ based on a subset of vertices.

\begin{definition}
Given an episode $G = (V, E, \lab{})$ and a subset of vertices $W$, define
a subset of edges $\blockp{W \mid G}$ of a machine $M = \mach{G}$,
\[
	\blockp{W \mid G} =  \{(H, F) \in E(M)  \mid  V(H) \cap W \neq \emptyset, V(F) \setminus V(H) \subseteq W \},
\]
that is, $\blockp{W \mid G}$ contains the edges $(H, F)$ such that
\begin{enumerate}
\item $F$ is obtained from $H$ by adding a vertex from $W$,
\item $H$ contains at least one vertex from $W$.\!\footnote{Consequently, $F$ contains at least two vertices from $W$.}
\end{enumerate}
\end{definition}

Let $G = (V, E, \lab{})$ be an episode.  Given a prefix graph $H$ with a vertex
set $W$, we define two sets of edges as $C_1 = \blockp{W \mid G}$ and $C_2
= \blockp{V \setminus W \mid G}$. Since our goal is to explain the support
of $G$ using \emph{smaller} episodes, we will require that $W \neq \emptyset$ and
that $W \neq V$. 

\begin{example}
Consider an episode $G$ given in Figure~\ref{fig:toyprefix} along with its prefix
graphs, and also the corresponding $\mach{G}$ given in Figure~\ref{fig:toymachine}.
There are four possible prefix graphs $H_2, \ldots, H_5$. These graphs give a
rise to the edge sets,
\[
\begin{split}
	H_2: \quad & C_1 = \emptyset, \\
	           & C_2 = \set{(H_4, H_5), (H_3, H_5), (H_5, H_6)}, \\[1mm]
	H_3: \quad & C_1 = \set{(H_2, H_3), (H_4, H_5)}, \\
	           & C_2 = \set{(H_5, H_6)}, \\[1mm]
	H_4: \quad & C_1 = \set{(H_2, H_4), (H_3, H_5)}, \\
	           & C_2 = \set{(H_5, H_6)}, \\[1mm]
	H_5: \quad & C_1 = \set{(H_2, H_3), (H_2, H_4), (H_4, H_5), (H_3, H_5)}, \\
	           & C_2 = \emptyset\quad. \\
\end{split}
\]
Let us consider $H_5$, an episode, where $a$ is followed by $b$ and $c$, in any
order. Let us assume $b$ and $c$ occurs almost immediately after $a$, in other words,
the edges in $C_1$ should be traversed quickly, which leads to a large $t_1$, and elevated
expected support. On the other hand, $(H_5, H_6)$ is not boosted in anyway, that is, we model 
$d$ independently of $a$, $b$, and $c$.

Let us now take a closer look on $H_3 = a \to b$. Assume that $H_3$ has elevated
support and the main reason for this elevated support is that $b$ occurs often
after $a$ almost immediately. Consider now the corresponding edges $C_1$ in $\mach{G}$,
$(H_2, H_3)$ and $(H_4, H_5)$. These edges correspond to seeing $b$ after we have witnessed
$a$ (in the latter we have also witnessed $c$ as a gap event). Hence, by our assumption
these edges should be traversed quickly, that is, $t_1$ should be large.
Similarly, $C_2$ corresponds to $c \to d$,
and if $d$ occurs often $c$, then $t_2$ should also be large.
Consequently, if $t_1$ and/or $t_2$ is large, then the model will yield an
increased expected support for $G$.
\end{example}

Note that in the definition of $\blockp{W \mid G}$ we require that the parent
node of an edge must be a state containing at least one member in $W$. 
For example, outgoing edges of the source state of $\mach{G}$ will never be
a part of $C_1$ or $C_2$.
The idea behind this constraint is that $C_1$ and $C_2$ should not model the likelihood
of finding the first vertex of the prefix graph (or the postfix graph). Instead
we want model how likely we will find the remaining vertices of an episode
once the first vertex is found in a sequence.

To justify the definition of $\blockp{W \mid G}$, consider a machine $N =
\mach{G(W)}$.  An abnormally large support of $G(W)$ suggests that the edges in
$N$ are traversed abnormally fast, that is, the number of gap events is low.
As the following proposition states the edges in $\blockp{W \mid G}$ have a
direct correspondence to the edges in $N$, and so they will be traversed
abnormally fast. By modelling this phenomenon with a parameter $t_1$, we hope
to take into account the large support of $G(W)$.

\begin{proposition}
Let $G = (V, E, \lab{})$ be an episode. Let $W \subseteq V$ be a subset of
vertices such that $G(W)$ or $G(V\setminus W)$ is a prefix subgraph.  Let $M =
\mach{G}$ and $N = \mach{G(W)}$. Define a mapping $\rho$ from states of $M$ to
states of $N$ to be $\rho(H) = H(V(H) \cap W)$. Then $\rho$ is a surjection and 
for any edge in $(H, F) \in E(M)$ one of the following holds
\begin{enumerate}
\item $\rho(H) = \rho(F)$ or $\rho(H)$ is the initial state or
\item $(\rho(H), \rho(F))$ is an edge in $N$ and $(H, F) \in \blockp{W \mid G}$.
\end{enumerate}
In addition, for every edge $(H', F') \in N$ there is an edge $(H, F) \in \blockp{W \mid G}$
such that $H' = \rho(H)$ and $F' = \rho(F)$.
\end{proposition}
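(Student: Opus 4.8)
The plan is to verify the three claims about $\rho$ directly from the definitions of prefix graph, machine, and $\blockp{W \mid G}$. First I would check that $\rho$ is well defined, i.e.\ that $\rho(H) = H(V(H) \cap W)$ is actually a state of $N = \mach{G(W)}$: one must see that $H(V(H)\cap W)$ is a prefix subgraph of $G(W)$. This follows because $H$ is a prefix subgraph of $G$ (all ancestors in $G$ of a vertex of $H$ lie in $H$), and intersecting with $W$ cannot destroy this property since every $G(W)$-ancestor of a vertex is in particular a $G$-ancestor. Surjectivity of $\rho$ is then easy: given any prefix subgraph $H'$ of $G(W)$ with vertex set $W' \subseteq W$, the set $W'$ together with all its $G$-ancestors forms a prefix subgraph $H$ of $G$, and by construction $V(H)\cap W = W'$ (using that $G(W)$ or $G(V\setminus W)$ is itself a prefix subgraph, so the extra $G$-ancestors we added all lie outside $W$), hence $\rho(H) = H'$.

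Next I would handle the dichotomy for an edge $(H,F) \in E(M)$. By definition of $\mach{G}$, $F$ is obtained from $H$ by adding a single sink vertex $v$, so $V(F) = V(H) \cup \{v\}$. There are two cases. If $v \notin W$, then $V(F)\cap W = V(H)\cap W$, so $\rho(H) = \rho(F)$ and we are in case (1). If $v \in W$, then $\rho(F) = \rho(H) \cup \{v\}$; I would then show $(\rho(H), \rho(F))$ is an edge of $N$ by checking $v$ is a sink of $\rho(F)$ in $G(W)$ (it is a sink of $F$ in $G$, hence a sink of the induced subgraph on $W$), and that both $\rho(H)$ and $\rho(F)$ are prefix subgraphs of $G(W)$ (first step). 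It remains to see $(H,F) \in \blockp{W\mid G}$: we need $V(F)\setminus V(H) = \{v\} \subseteq W$, which holds, and $V(H)\cap W \neq \emptyset$. The latter is where the hypothesis that $G(W)$ or $G(V\setminus W)$ is a prefix subgraph is used together with the assumption $W \neq \emptyset$, $W\neq V$: since $v \in W$ and $v$ is not the initial vertex added (i.e.\ $\rho(H)$ is not the initial state of $N$ — the remaining subcase of (1)), $v$ must have a $G(W)$-ancestor, which lies in $W \cap V(H)$. So either $\rho(H)$ is the initial state (case (1)) or $V(H)\cap W\neq\emptyset$ and we land in case (2).

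Finally, for the last statement, take an edge $(H', F') \in E(N)$, so $F' = H' \cup \{v\}$ with $v$ a sink of $F'$ in $G(W)$ and $H'$ a nonempty\footnote{nonempty because $(H',F')$ being an edge forces $F'$ to have at least two vertices once $H'$ is not the source — but if $H'$ were the source of $N$ the edge would not be in $\blockp{W\mid G}$; I would instead argue $H'$ already contains a vertex of $W$ because $v$ has an ancestor in $G(W)$.} prefix subgraph of $G(W)$. Using surjectivity I would lift $H'$ to a prefix subgraph $H$ of $G$ with $\rho(H) = H'$, then argue $v$ can be added to $H$ as a sink in $G$ — here one must be careful that $v$'s $G$-ancestors all lie in $H$; those ancestors inside $W$ lie in $H'\subseteq H$ since $H'$ is a prefix subgraph of $G(W)$ containing $v$'s $G(W)$-parents, and those outside $W$ lie in $H$ by the construction of the lift (adding all $G$-ancestors). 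Setting $F = H \cup \{v\}$ gives an edge $(H,F) \in E(M)$ with $\rho(F) = F'$, and $(H,F)\in\blockp{W\mid G}$ by the same check as above. I expect the main obstacle to be the bookkeeping around the initial state and the hypothesis ``$G(W)$ or $G(V\setminus W)$ is a prefix subgraph'': this is precisely what guarantees that lifting a prefix subgraph of $G(W)$ does not force us to pick up extra vertices of $W$, and that a vertex of $W$ other than the first one always has an ancestor already present. Everything else is routine unwinding of the induced-subgraph and prefix-graph definitions.
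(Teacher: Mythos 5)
Your well-definedness argument, the surjectivity lift, and the edge dichotomy are essentially sound and close in spirit to the paper's proof; the paper simply assumes w.l.o.g.\ that $G(W)$ is a prefix subgraph, so that every state of $N$ is \emph{literally} a state of $M$ with $\rho(H)=H$, which makes surjectivity immediate, while your ``add all $G$-ancestors of $V(H')$'' lift is a variant that handles both cases at once (note that $V(H)\cap W=V(H')$ there follows already from transitive closure and $H'$ being a prefix subgraph of $G(W)$, not from the stated hypothesis). One piece of your dichotomy argument is muddled: the claim that ``$v$ must have a $G(W)$-ancestor'' is false when $v$ is a source of $G(W)$, and it is also unnecessary, since $\rho(H)$ being non-initial \emph{is by definition} the statement $V(H)\cap W\neq\emptyset$, which is all that membership in $\blockp{W \mid G}$ requires; the case split you end with is exhaustive anyway, so this does not break that part.

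The genuine gap is in the last claim. Your lift $H=H'\cup(\text{$G$-ancestors of }V(H'))$ need not contain the parents of $v$ lying outside $W$, because those need not be ancestors of any vertex of $H'$; then $H\cup\{v\}$ is not a prefix subgraph of $G$ and $(H,H\cup\{v\})$ is not an edge of $M$, contrary to your assertion that the missing ancestors are present ``by the construction of the lift.'' Concretely, let $G$ have vertices $u,x,v$ with the single edge $(u,v)$ and take $W=\{x,v\}$ (so $G(V\setminus W)=G(\{u\})$ is a prefix subgraph); for the edge $(\{x\},\{x,v\})$ of $N$ your lift gives $H=\{x\}$, and $\{x,v\}$ is not a prefix subgraph of $G$. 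The claim itself survives—take $H=\{u,x\}$, $F=\{u,x,v\}$—so the repair is to use a lift that also contains the $G$-ancestors of $v$ (their intersection with $W$ already lies in $H'$, so $\rho$ is unaffected), or to follow the paper's route: when $G(W)$ is a prefix subgraph the edge of $N$ is itself an edge of $M$ and lifts to itself, and when $G(V\setminus W)$ is a prefix subgraph one can take $H=H'\cup(V\setminus W)$. Finally, your footnote's attempt to prove $H'\neq\emptyset$ cannot succeed: if $v$ is a source of $G(W)$ there is an edge of $N$ leaving the initial state, $v$ has no $G(W)$-ancestor, and no edge of $\blockp{W \mid G}$ can map onto such an edge at all, since membership requires $V(H)\cap W\neq\emptyset$. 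Those edges must simply be excluded from the claim (consistently with how $C_1$, $C_2$ are built only from states already containing a vertex of $W$), not argued away.
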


\begin{proof}
Assume that $G(W)$ is a prefix subgraph (the $G(V\setminus W)$ case is
similar).  A state $H$ in $N$ corresponds to a prefix subgraph of $G(W)$, which
makes $H$ also a prefix graph of $G$, and, by definition, a state in $M$. The
fact that $\rho(H) = H$, makes $\rho$ a surjection.

Let $H$ and $F$ be two states in $M$ such that $(H, F) \in E(M)$. Then $F$ is
obtained from $H$ by adding one vertex, say $w$. If $w \notin W$, then $\rho(H) = \rho(F)$.
Assume that $w \in W$ and $\rho(H)$ is not the initial state, that is, $H \cap W \neq \emptyset$.
Then, by definition, $(\rho(H), \rho(F))$ is an edge in $N$ and $(H, F) \in \blockp{W \mid G}$

The last statement follows immediately from the fact that $\rho(H) = H$ whenever
$H$ is a prefix subgraph of $G(W)$.\qed
\end{proof}

Let $H$ be a prefix graph of $G$ and let $C_1$ and $C_2$ be the edges
as constructed above. Write $M(C_1, C_2)$ to be the partition model.
We define the rank to be
\[
    \rpart{G; H} = \rank{G \mid M(C_1, C_2)}\quad.
\]
This rank can be computed using Proposition~\ref{prop:partcascade}.
In our experiments, we mimic approach by~\citet{webb:10:self-sufficient} for
itemsets and use the smallest rank among all possible prefix graphs, see
Section~\ref{sec:partrank} for more details.

We should point out that from technical point of view, $H$ in $\rpart{G; H}$
does not need to be a prefix graph. However, models that are generated from
non-prefix graphs may behave unexpectedly.

\begin{example}
\label{ex:nonprefix}
Consider an episode $G = a \to b \to c$ and let $W = \set{a, c}$.
The machine $\mach{G}$ consists of 4 states $H_1 \to H_2 \to H_3 \to H_4$, and $C_1 = \blockp{W \mid G} = (H_3, H_4)$ and $C_2 = \emptyset$.
Note that in this case $t_1$ does \emph{not} model the number of gaps between $a$ and $c$, instead
it models the number of gaps between $b$ and $c$.  In fact, if we set $W' = \set{b, c}$, then $C_1 = \blockp{W' \mid G}$
and $C_2 = \blockp{\set{a} \mid G}$.
\end{example}

The essential problem shown in the example is that there is no direct
transition in $\mach{G}$ of observing $c$ after we have seen $a$.  The
following proposition shows that this problem can be prevented if and only if we use
prefix graphs.

\begin{proposition}
\label{prop:nonprefix}
Let $G$ be an episode and let $M = \mach{G}$ be the corresponding machine. 
Let $W_1$ be a subset of nodes, and let $W_2 = V \setminus W_1$. Then the following statements are equivalent:
\begin{enumerate}
\item either $W_1$ or $W_2$ induces a prefix graph.
\item for any $X \in \pre{G}$ and $i = 1,2$ such that $\emptyset \neq V(X) \cap W_i \neq W_i$ there exists 
$Y \in \pre{G}$, depending on $X$ and $i$, such that $(X, Y) \in \blockp{W_i  \mid G}$.
\end{enumerate}
\end{proposition}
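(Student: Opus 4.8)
The plan is to prove the two implications separately, using the earlier combinatorial description of $\blockp{W_i \mid G}$ and the structure of prefix graphs. Throughout, write $W_1$ for the given subset and $W_2 = V \setminus W_1$, and recall that a state $X \in \pre{G}$ is just a prefix subgraph, identified with its vertex set $V(X)$.

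For the direction (1) $\Rightarrow$ (2), assume without loss of generality that $W_1$ induces a prefix graph (the $W_2$ case is symmetric since the roles of $W_1$ and $W_2$ in statement (2) are interchangeable). Fix $X \in \pre{G}$ and $i \in \set{1,2}$ with $\emptyset \neq V(X) \cap W_i \neq W_i$; I need to produce $Y \in \pre{G}$ with $(X,Y) \in \blockp{W_i \mid G}$, i.e. $Y$ is obtained from $X$ by adding a single vertex $w \in W_i$. The key observation is this: since $V(X) \cap W_i \subsetneq W_i$, there is some vertex $u \in W_i \setminus V(X)$. Among all such $u$, pick one that is minimal in the partial order of $G$ restricted to $W_i \setminus V(X)$ (using acyclicity). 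I claim $Y = G(V(X) \cup \set{u})$ is a prefix subgraph: any parent $p$ of $u$ in $G$ either lies in $V(X)$ (fine), or lies outside $V(X)$; in the latter case I must argue $p \in W_i$, which is where I will use that $W_i$ (or its complement) is a prefix graph — if $W_1$ is a prefix graph and $i = 1$, then $u \in W_1$ forces $p \in W_1$, contradicting minimality of $u$; if $i = 2$, then $p \notin W_1 = $ the prefix graph would be needed, and I instead use that $p$ being a parent of $u \in W_2$ with $p \notin V(X)$ and $W_1$ a prefix graph rules out $p \in W_1$ (since then $u \in W_1$), so $p \in W_2$, again contradicting minimality. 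Finally $(X,Y) \in \blockp{W_i \mid G}$ requires also $V(X) \cap W_i \neq \emptyset$, which is exactly the hypothesis $\emptyset \neq V(X) \cap W_i$.

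For the contrapositive of (2) $\Rightarrow$ (1), suppose neither $W_1$ nor $W_2$ induces a prefix graph. Then there is a vertex $u \in W_1$ with a parent $p \notin W_1$, i.e. $p \in W_2$, and symmetrically a vertex $u' \in W_2$ with a parent $p' \in W_1$. The plan is to build a single prefix graph $X$ that simultaneously has $\emptyset \neq V(X)\cap W_i \neq W_i$ for the relevant $i$ but admits no valid $Y$. Concretely, take $X$ to be a prefix subgraph containing $p'$ but not $u'$ — for instance the downward closure (ancestor closure) of $p'$, which is a prefix graph and lies inside... here I must be careful: its ancestor set need not lie in $W_1$. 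The cleaner route: choose $X$ to be the ancestor-closure of $\set{p'}$ together with enough vertices so that $V(X) \cap W_2 \neq \emptyset$ but still $V(X) \cap W_2 \neq W_2$ (possible since $u' \notin$ can be arranged), and such that every vertex of $W_2$ one could add to $X$ while staying a prefix graph... actually the sharpest statement is that the specific vertex $u'$ with parent $p' \in W_1$ shows: any $Y$ obtained by adding a $W_2$-vertex $w$ to a prefix graph $X$ with $V(X) \supseteq$ (ancestors of $w$) must already contain $u'$'s parent chain — I will pick $X$ = ancestor-closure of $\set{u'} \setminus \set{u'}$ extended minimally, so that $u'$ is addable only after its parent $p' \in W_1$ is in, forcing $V(X) \cap W_1 \neq \emptyset$, yet the requirement $V(X) \cap W_2 \neq \emptyset$ fails for the minimal such $X$ — contradicting (2) with $i = 2$. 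This shows (2) fails.

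The main obstacle is the contrapositive direction (2) $\Rightarrow$ (1): constructing the single witness prefix graph $X$ that kills statement (2) requires threading between the constraints $\emptyset \neq V(X)\cap W_i$ and $V(X) \cap W_i \neq W_i$ while using the "crossing edge" $p' \to u'$ to block all candidate extensions $Y$. The forward direction is comparatively routine once the minimality trick is in place. I expect the cleanest writeup of the hard direction isolates the crossing edge $(p', u')$ with $p' \in W_1$, $u' \in W_2$, takes $X$ to be the ancestor closure of $p'$ in $G$ — which is a legitimate prefix graph and satisfies $p' \in V(X) \cap W_1$ so $V(X) \cap W_1 \neq \emptyset$ — then observes that if additionally $V(X) \cap W_1 \neq W_1$ we get an instance of (2) with $i = 1$ whose witness $Y$ adds a $W_1$-vertex, and by iterating we are forced to eventually swallow all of $W_1$ without ever being allowed to add $u'$ (whose inclusion is what a prefix-graph structure on $W_2$ would permit), yielding the contradiction; the bookkeeping that this iteration terminates correctly and genuinely contradicts (2) is the part that needs care.
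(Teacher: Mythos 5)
Your write-up does not contain a proof of the hard direction (2)$\Rightarrow$(1), and the device that actually makes it work is missing. The paper argues the contrapositive as follows: from the failure of $W_1$ to induce a prefix graph it extracts a crossing edge $(v,w)\in E$ with $v\in W_2$, $w\in W_1$, and takes $X$ to be the \emph{largest} prefix graph not containing $v$ (it exists because the union of prefix graphs avoiding $v$ again avoids $v$). Maximality then does all the work: any one-vertex extension of $X$ by a vertex of $W_1$ still avoids $v$, contradicting maximality, so no edge of $\blockp{W_1\mid G}$ leaves $X$, while $w\notin V(X)$ guarantees $V(X)\cap W_1\neq W_1$; this kills (2) for $i=1$ provided $V(X)\cap W_1\neq\emptyset$. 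The leftover case $V(X)\cap W_1=\emptyset$ needs a second step your sketch does not anticipate: it forces $(v,u)\in E$ for every $u\in W_1$ (transitive closure), and one then repeats the maximality trick with the other crossing edge $(a,b)$, $a\in W_1$, $b\in W_2$, taking $X'$ to be the largest prefix graph avoiding $a$, which contains $v$ and misses $b$, killing (2) for $i=2$. Your sketch never reaches this. The candidate ``$X=$ ancestor closure of $p'$, extended minimally so that $V(X)\cap W_2\neq\emptyset$ fails'' cannot contradict (2), because a state with $V(X)\cap W_2=\emptyset$ does not satisfy the hypothesis of (2) at all; and the iteration argument (keep applying (2) with $i=1$ until all of $W_1$ is ``swallowed'', never adding $u'$) contradicts nothing either, since (2) only promises \emph{some} one-vertex extension inside $W_i$ while its hypothesis holds, and once $W_1\subseteq V(X)$ the $i=1$ hypothesis is void; you never exhibit a concrete state at which the hypothesis of (2) holds and no admissible $Y$ exists. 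You yourself flag this bookkeeping as the part that ``needs care''; that bookkeeping is precisely the proof.

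The easy direction also contains a wrong step. Your minimal-vertex argument is sound on the prefix side ($i=1$ when $W_1$ is ancestor-closed: a minimal $u\in W_1\setminus V(X)$ has all its parents in $V(X)$), but for $i=2$ you infer ``$p\in W_1$ would force $u\in W_1$'' from $W_1$ being a prefix graph --- that is descendant-closure, whereas prefix graphs are ancestor-closed, so the claimed contradiction with minimality does not follow. Moreover this subcase cannot be repaired by the same trick: with $G$ consisting of $a\to b$ plus an isolated vertex $c$, $W_1=\set{a}$ and $X=G(\set{c})$, the only machine edge leaving $X$ adds $a\notin W_2$, so no $Y$ with $(X,Y)\in\blockp{W_2\mid G}$ exists at all. (The paper waves this direction off as trivial, but your written justification for $i=2$ is not a correct one, and any honest treatment has to confront examples like the one above rather than appeal to minimality.)
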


\begin{proof}
The direction \emph{(1)} $\to$ \emph{(2)} is trivial.
Let us prove the other direction. Assume that neither $W_1$ nor $W_2$ induce a
prefix graph. Then there is $v \in W_2$ and $w \in W_1$ such that $(v, w) \in E$.
Let $X$ be the largest prefix graph not containing $v$, such graph exists as the union of the two
prefix graphs is a prefix graph.

Assume that $V(X) \cap W_1 \neq \emptyset$.
Since $w \notin V(X)$, we also have $V(X) \cap W_1 \neq W_1$. Assume that that there
is $Y \in \pre{G}$, such that $(X, Y) \in \blockp{W_1  \mid G}$. By definition, $Y$ is obtained
from $X$ by adding a vertex from $W_1$. Since $Y$ also does not contain $v$, this violates the maximality of $X$.

Assume that $V(X) \cap W_1 = \emptyset$. This is possible only if $(v, u) \in E$ for every $u \in W_1$.
Since $W_2$ does not induce a prefix graph, there is $a \in W_1$ and $b \in W_2$ such that $(a, b) \in E$.
Define $X'$ to be the maximal prefix graph not containing $a$. This graph contains $v$ and does
not contain $b$. Consequently,  $W_2 \neq V(X') \cap W_2 \neq \emptyset$. Assume that that there
is $Y \in \pre{G}$, such that $(X', Y) \in \blockp{W_2  \mid G}$. By definition, $Y$ is obtained
from $X$ by adding a vertex from $W_2$. Since $a \notin V(Y)$, this violates the maximality of $X'$.
\qed
\end{proof}

\subsection{Partition model from superepisodes}\label{sec:super}

In the previous section we considered a model predicting the support of an
episode based on two smaller episodes. In this section we approach the ranking
from another perspective. Namely, we try to predict the support of $G$ using superepisodes
of $G$.

In order to motivate this consider the following example.

\begin{example}
Consider two episodes $G_1$ and $G_2$ given in Figure~\ref{fig:toysuper}.
Episode $G_2$ is a superepisode of $G_1$.
Assume that in our dataset, event $b$ occurs often once $a$ has occurred.  This
is to say that if we are in $H_2$ in either $\mach{G_1}$ or $\mach{G_2}$ we are
likely to move soon to $H_4$.

Assume also that occurrence of $a$ after $b$ follows the independence model,
or that $b$ occurs rarely without $a$ in front of it. In both cases the
elevated support of $G_1$ can be explained by the fact that $b$ follows often
after $a$. This means that we can explain the elevated support of $G_1$ if we know that $G_2$ 
has an elevated support, and consequently we should assign $G_1$ a low rank.
\end{example}

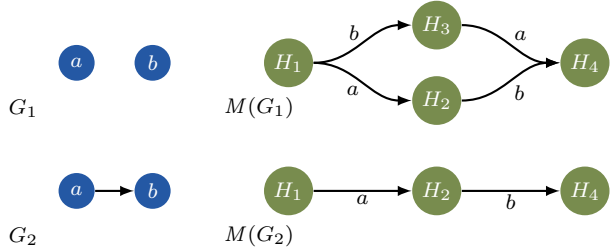
\begin{figure}[ht!]
\begin{center}
\begin{tikzpicture}[baseline]
\node[exnode] (a1) at (0, 0) {$a$};
\node[exnode] (a2) at (1, 0) {$b$};
\node at (-0.7, -0.6) {$G_1$};
\end{tikzpicture}\hspace{0.5cm}
\begin{tikzpicture}[auto, baseline]
\node[machnode] at (7.8*0.25, 0) (n2) {$H_1$};
\node[machnode] at (7.8*0.5, -0.5) (n3) {$H_2$};
\node[machnode] at (7.8*0.5, 0.5) (n4) {$H_3$};
\node[machnode] at (7.8*0.75, 0) (n5) {$H_4$};

\draw[machedge, in = 180, out = 0] (n2) edge node[swap, inner sep = 1pt] {$a$} (n3);
\draw[machedge, in = 180, out = 0] (n2) edge node[inner sep = 1pt] {$b$} (n4);
\draw[machedge, in = 180, out = 0] (n3) edge node[swap, inner sep = 1pt] {$b$} (n5);
\draw[machedge, in = 180, out = 0] (n4) edge node[inner sep = 1pt] {$a$} (n5);
\node at (7.8*0.2, -0.6) {$\mach{G_1}$};
\end{tikzpicture}\\[0.5cm]

\begin{tikzpicture}[baseline]
\node[exnode] (a1) at (0, 0) {$a$};
\node[exnode] (a2) at (1, 0) {$b$};
\draw[exedge] (a1) -- (a2);
\node at (-0.7, -0.6) {$G_2$};
\end{tikzpicture}\hspace{0.5cm}
\begin{tikzpicture}[auto, baseline]
\node[machnode] at (7.8*0.25, 0) (n2) {$H_1$};
\node[machnode] at (7.8*0.5, -0) (n3) {$H_2$};
\node[machnode] at (7.8*0.75, 0) (n5) {$H_4$};

\draw[machedge, in = 180, out = 0] (n2) edge node[swap, inner sep = 1pt] {$a$} (n3);
\draw[machedge, in = 180, out = 0] (n3) edge node[swap, inner sep = 1pt] {$b$} (n5);
\node at (7.8*0.2, -0.6) {$\mach{G_2}$};
\end{tikzpicture}
\end{center}
\caption{Episode $G_1$ and $G_2$ and the corresponding machines $\mach{G_1}$, $\mach{G_2}$.}
\label{fig:toysuper}
\end{figure}

Assume two episodes $G = (V, E_1, \lab{})$ and $H = (V, E_2,
\lab{})$ such that $E_1 \subsetneq E_2$. If $(W, E_2(W))$ is a prefix graph of $H$, then
$(W, E_1(W))$ is also a prefix graph of $G$. This allows us to define a mapping $\rho$ 
from $\pre{H}$ to $\pre{G}$ by setting $\rho((W, E_2(W))) = (W, E_1(W))$.
Moreover, if $x$ is a sink in $(W, E_2(W))$, then it is also a sink in $(W, E_1(W))$.
This immediately implies $\rho$ can be viewed as a graph homomorphism from $\mach{H}$
to $\mach{G}$, essentially making $\mach{H}$ a subgraph of $\mach{G}$.
We can now define the set of edges of
$\mach{G}$ for our partition model to be the edges in $\mach{H}$.  More
formally,

\begin{definition}
Given two episodes $G = (V, E_1, \lab{})$ and $H = (V, E_2, \lab{})$
such that $E_1 \subsetneq E_2$,
define a subset of edges $\blocks{H \mid G}$ of a machine $M = \mach{G}$,
\[
	\blocks{H \mid G} = \set{\rho(X, Y) \in E(M) \mid X \neq \emptyset, (X, Y) \in E(\mach{H})},
\]
that is, $\blocks{H \mid G}$ contains the edges from non-source vertices that can be also found in $\mach{H}$.
\end{definition}

We can now define $C_1 = \blocks{H \mid G}$ to be the first set of edges and
$C_2 = \emptyset$. Note that, similarly to the prefix graph approach from the
previous section, $C_1$ will not contain any edges from the source state.  The
rationale here is the same: transitions from the source state indicate
beginning of an episode while we are interested in modelling how fast we can
find the complete episode once we have found the first label. Also note that
since we require that $E_1 \neq E_2$, we will have at least one edge $(H, F)
\in E(\mach{G})$ such that $H \neq \emptyset$ and $(H, F)$ is not contained $C_1$.

\begin{example}
Consider an episode $G$ given in Figure~\ref{fig:toyprefix} 
and also the corresponding $\mach{G}$ given in Figure~\ref{fig:toymachine}.

Assume a serial episode $H = a \to b \to c \to d$. Then
\[
	\blocks{H \mid G} = \set{(H_2, H_3),\, (H_3, H_5),\, (H_5, H_6)},
\]
where $H_i$ are given in  Figure~\ref{fig:toymachine}.

On the other hand, if we set $H = a \to c \to b \to d$. Then
\[
	\blocks{H \mid G} = \set{(H_2, H_4),\, (H_4, H_5),\, (H_5, H_6)}\quad.
\]
\end{example}

Let $H$ be a superepisode of $G$ and let $C_1$ be the edges 
as constructed above. Write $M(C_1, \emptyset)$ to be the partition model.
We define the rank to be
\[
    \rpart{G; H} = \rank{G \mid M(C_1, \emptyset)}\quad.
\]
This rank can be computed using Proposition~\ref{prop:partcascade}.
In our experiments, we use the smallest rank induced by a superepisode in our
candidate set.

\subsection{Combining ranks}\label{sec:partrank}

Now that we have defined several different partition models, we propose a
simple approach to combine these models into a single rank.

To that end, assume that we have a collection $\mathcal{C}$ of episodes that we wish to rank.
These candidate episodes are obtained, for example, by mining frequent closed episodes.
For a given episode $G \in \mathcal{C}$,
let $\mathcal{P} = \pre{G} \setminus \set{G(\emptyset), G}$ be the prefix graphs
without the empty or the full prefix graph. Also, let $\mathcal{Q}$ be the proper superepisodes
of $G$ in $\mathcal{C}$ having the same vertices as $G$. We then compute the rank
by taking the smallest rank among all partition models,
\[
    \rpart{G} = \min(\min_{H \in \mathcal{P}} \rpart{G ; H},  \min_{H \in \mathcal{Q}} \rpart{G ; H})\quad.
\]
That is, if we can explain the support of $G$ by either a single prefix model
or a single superepisode in $Q$, then we will deem $G$ as redundant.

This approach mimics the approach of~\citet{webb:10:self-sufficient}, where
itemsets are filtered by comparing the observed support against the best
2-partition model.

\paragraph{Computational complexity} Finally, let us conclude this section with
a short discussion about computational complexity. Assume that we have an episode $G$
with $n$ nodes. Let $m = \abs{E(\mach{G})}$ be the number of edges in $\mach{G}$.

Using the partition model is
a two-step process, the first step is to find the parameters while the second
step is to compute the rank.  The first step uses iterative gradient descent,
for example, Newton-Raphson descent that requires $O(n^{2.373})$ time
for Hessian inversion and $O(n^2 + m)$ time for
constructing the matrix and gradient. The dominating term will depend on structure of the episode.
For example, for serial episodes we have $m = n$.
For general episodes we must have $m \leq 2^n$ and 
for parallel episodes we have $m = 2^n$.

In order to compute $\rpart{G}$ we need to loop over all prefix episodes. Again, the
number of such episodes depends on $G$. For serial episodes there are only $n + 1$ such episodes,
whereas a parallel episode has $2^n$ prefix episodes. The parallel episode case is the worst case
since there are only $2^n$ subepisodes in any $G$.

This implies that in theory computing this rank may not scale for large
episodes, especially if they are parallel. Fortunately, in practice, most
episodes are small and for these cases our approach remains feasible.

\section{Related Work}\label{sec:related}

\emph{Discovering episodes:} Episode discovery was introduced
by~\citet{mannila:97:discovery} where the authors consider episodes defined as DAGs
and consider two concepts of support: the first one based on sliding windows of fixed length
and the second one based on minimal windows. Unfortunately, the number of minimal windows is not
monotonic in general---however this can be fixed by considering the maximal
number of non-overlapping windows, see for example~\citep{laxman:07:fast}.
Mining general episodes can be intricate and computationally heavy, for example, discovering whether a
sequence covers a general episode is \np-hard~\cite{tatti:11:mining}.
Consequently, research focus has been into mining subclasses of episodes, such
as, episodes with unique labels~\cite{achar:12:discovering,pei:06:discovering},
and strict episodes~\cite{tatti:12:mining}.  A miner for general episodes that
can handle simultaneous events was proposed by~\citet{tatti:11:mining}.  An
important subclass of episodes are serial episodes or sequential patterns.  A
widely used miner for mining closed serial episodes was suggested
by~\citet{wang:04:bide}.

\emph{Ranking episodes:} Unlike with itemset mining, ranking episodes based on
surprisingness is underdeveloped. The most straightforward way of ranking
episodes, reviewed in Section~\ref{sec:independence}, by comparing the support
against the independence model, was introduced by~\citet{gwadera:05:reliable}.
Using Markov models instead of the independence model to rank serial episodes
was suggested by~\citet{gwadera:05:markov}. Both of these pioneer works focus
on ranking episodes by analyzing support based on a sliding window, that is, the
input dataset is a single sequence and the support of an episode is the number
of sliding windows of fixed length that cover the episode. Interestingly
enough, this scenario generates technical complications since the windows are
no longer independent, unlike in the setup where we have many sequences and we
assume that they are generated independently. These complications can be
overcome but they require additional computational steps.  Instead of using
windows of fixed length, ranking based on minimal window lengths with respect
to the independence model was suggested by~\citet{tatti:14:mining}.  Ranking
serial episodes allowing multiple labels using the independence model was
suggested by~\citet{DBLP:conf/icdm/Low-KamRKP13}.  \citet{achar:12:discovering}
also considered a measure that downranks the episode if there is
a non-edge $(x, y)$ that occurs rarely, which suggests that we should augment
the episode with the edge $(y, x)$. 

In related work, \citet{mannila:00:global} consider general episodes as
generative models for sequences. They generate short sequences by selecting a
subset of events from an episode and ordering events with a random order
compatible with the episode. They do not allow gaps and only one pattern is
responsible for generating a single sequence.

Finally, SQS and GoKrimp, pattern set mining approaches for discovering serial episodes were
respectively introduced by~\citet{tatti:12:long} and by~\citet{DBLP:journals/sadm/LamMFC14}.
The idea behind the approach is to find a
small set of serial episodes that model the data well. In order to do that the
authors constructed a model given a set of episodes and used a posteriori
probability of the model to score the episode set. The authors then used a
heuristic search to find a set with good episodes. In general, the goal of our
approach and the is the same: reducing
the redundancy in patterns. From a technical point of view, the approaches are
different: in this work we rank episodes based on how surprising their support
is while the pattern set mining methods select episodes based on
how well we can model the data using the episodes. Moreover, we work with
general strict episodes while the current pattern set approaches limit
themselves to serial episodes. Extending these miners to general episodes is an
interesting future line of research. However, it is highly non-trivial due to
the fact that the score, the algorithm for computing the score, and the
mining algorithm are specifically designed for serial episodes.

\section{Experiments}\label{sec:exp}

\emph{Datasets:}
In our experiments we used 3 synthetic datasets and 3 text datasets.
The sizes of the datasets are given in Table~\ref{tab:basic}.

The first synthetic dataset, \emph{Plant}, was created as follows. We generated
$10\,000$ sequences of length randomly selected from a uniform distribution
between $20$ and $30$.  A single event in each sequence was generated from a
uniform distribution of $990$ events.  We planted two serial episodes and one
general in the data. The first episode, a serial episode of 4 vertices was
planted with no gaps into a randomly selected sequence $200$ times. The second
episode, a serial episode of 2 vertices was planted with no gaps into a
randomly selected sequence $20$ times.  The third episode, given in
Table~\ref{tab:plant}, was planted $10$ times with no gaps, the order of events
$n$ and $m$ was picked uniformly.  We made sure that the events used in planted
patterns did not occur in the noise.  This gave us an alphabet of size $1000$.

The second synthetic dataset, \emph{Plant2}, was created as follows. We
generated $10\,000$ sequences of length randomly selected from a uniform
distribution between $20$ and $30$.  A single event in each sequence was
generated from a uniform distribution of $1000$ events.  We planted two serial
episodes with 3 vertices with no gaps $400$ times.

The third synthetic dataset, \emph{Gap}, was created as follows. Similarly to \emph{Plant}, we generated
$10\,000$ sequences of length between $20$ and $30$. An event in each sequence was generated from a
uniform distribution of $996$ events. We planted one serial episode of 4
events into the data $200$ times.  We set the probability of the next event
being a noise event to be $p$, this made the average gap length to be $p/(1 -
p)$. We varied $p$ from $0$ to $0.8$ with $0.05$ increments. We did not plant
events if they did not fit into a sequence.

Our fourth dataset, \emph{Moby}, is the novel Moby Dick by Herman
Melville.\!\footnote{\url{http://www.gutenberg.org/etext/15}.}  Our fifth
dataset, \textit{JMLR} consists of abstracts of papers from the Journal of
Machine Learning Research website,\!\footnote{\url{http://jmlr.csail.mit.edu/}}
Our final dataset, \emph{Addresses}, consists of inaugural addresses of the
presidents of the United States.\!\footnote{\url{http://www.bartleby.com/124/}}
We processed the datasets by stemming the words and removing the stop words.
We further split the text into sequences such that a sequence corresponds to a
single sentence.

\begin{table}[ht!]
\begin{tabular*}{\columnwidth}{@{\extracolsep{\fill}}lrrrrr}
\toprule
Dataset & $\abs{\mathcal{S}}$ & $\abs{\text{events}}$ & $\sigma$ & $\abs{\mathcal{C}}$ & time\\
\midrule
\emph{Plant}     & 10\,000 & 249\,955 & 10 & 43\,029 & 56s \\
\emph{Plant2}    & 10\,000 & 249\,736 & 10 & 46\,329 & 50s \\
\emph{Gap}       & 10\,000 & 250\,150 & -- &      -- &  -- \\[1mm]
\emph{Addresses} &    5584 &  62\,066 &  5 & 19\,367 & 12s \\
\emph{JMLR}      &    5986 &  75\,646 &  5 & 49\,951 & 46s \\
\emph{Moby}      & 13\,987 & 105\,671 &  5 & 17\,550 & 26s \\
\bottomrule
\end{tabular*}
\caption{Basic characteristics of datasets, frequency thresholds, the numbers of discovered
episodes, and running time needed to rank the episodes. The number of events for \emph{Gap} is an average over 17 datasets.}
\label{tab:basic}
\end{table}

\emph{Setup:} We mined closed strict episodes from each dataset, except \emph{Gap}, with a miner
given by~\citet{tatti:12:mining}. As frequency thresholds we used 5 for text datasets and
10 for the synthetic dataset. The amount of discovered patterns, $\abs{\mathcal{C}}$, is given in
Table~\ref{tab:basic}. We then proceeded by ranking each episode first by
independence model and then by the partition model.\!\footnote{The implementation is available at \url{http://research.ics.aalto.fi/dmg/}.}

\emph{Results:}
Our main goal is to compare $\rpart{G}$, ranks
given by the partition model, against the baseline ranks given by the independence model,
$\rind{G}$. 

Let us first consider the synthetic dataset \emph{Gap}. We considered ranks for
3 different episodes, given in Figure~\ref{fig:gap}, the planted serial episode
$G_1$, the planted episode with additional noise event $G_2$, here we took an
average rank of 10 such episodes, and finally $G_3$ a non-trivial subepisode of
$G_1$.  The ranks $\rind{G_1}$ and $\rind{G_3}$ were outside floating point
range.  The remaining ranks are given in Figure~\ref{fig:gap} as a function of
the gap probability.  Let us first consider $G_2$ and $G_3$. Unlike the
independence model, the partition model predicts the support accurately for
these patterns which results in a low rank. Episode $G_2$ is predicted
accurately due to a partition of $G_2$ to $G_1$ and the noise label while $G_3$
is predicted accurately due to $G_1$ being a superepisode of $G_3$.  As
expected, the rank $\rpart{G_1}$ remains high as there are no partition model
that can explain this pattern. This rank goes down as the average gap length increases as the
planted pattern becomes more and more explainable by the independence model.

\begin{figure}
\begin{tikzpicture}
\node[smallnode] (n1) at (0, 0) {$a$};
\node[smallnode, base right = 0.4cm of n1] (n2) {$b$};
\node[smallnode, base right = 0.4cm of n2] (n3) {$c$};
\node[smallnode, base right = 0.4cm of n3] (n4) {$d$};
\node[smallnode, base left = 0.0cm of n1] (n0) {$G_1$};

\draw (n1.mid east) edge[smalledge, out = 0, in = 180] (n2.mid west);
\draw (n2.mid east) edge[smalledge, out = 0, in = 180] (n3.mid west);
\draw (n3.mid east) edge[smalledge, out = 0, in = 180] (n4.mid west);

\node[smallnode] (m1) at (0, -1) {$a$};
\node[smallnode, base right = 0.4cm of m1] (m2) {$b$};
\node[smallnode, base right = 0.4cm of m2] (m3) {$c$};
\node[smallnode, base right = 0.4cm of m3] (m4) {$d$};
\node[smallnode, base right = 0.4cm of m4] (m5) {$x$};
\node[smallnode, base left = 0.0cm of m1] (m0) {$G_2$};

\draw (m1.mid east) edge[smalledge, out = 0, in = 180] (m2.mid west);
\draw (m2.mid east) edge[smalledge, out = 0, in = 180] (m3.mid west);
\draw (m3.mid east) edge[smalledge, out = 0, in = 180] (m4.mid west);
\draw (m4.mid east) edge[smalledge, out = 0, in = 180] (m5.mid west);

\node[smallnode] (k1) at (0, -2) {$a$};
\node[smallnode] at (1, -1.6) (k2) {$b$};
\node[smallnode] at (1, -2.4) (k3) {$c$};
\node[smallnode, base right = 1.8cm of k1] (k4) {$d$};
\node[smallnode, base left = 0.0cm of k1] (k0) {$G_3$};

\draw (k1.mid east) edge[smalledge, out = 0, in = 180] (k2.mid west);
\draw (k1.mid east) edge[smalledge, out = 0, in = 180] (k3.mid west);
\draw (k3.mid east) edge[smalledge, out = 0, in = 180] (k4.mid west);
\draw (k2.mid east) edge[smalledge, out = 0, in = 180] (k4.mid west);

\end{tikzpicture}
\begin{tikzpicture}[baseline]
\begin{axis}[xlabel={gap probability}, ylabel= {score},
    height = 2.8cm,
    width = 2.8cm,
    scale only axis,
    cycle list name=yaf,
	every axis plot post/.style = {},
	ymin = 0,
	ymax = 60,
	every axis legend/.append style = {at = {(-0.02, 0.1)}, anchor = south west}
    ]
\addplot+[mark size = 1, fill opacity = 0.2, mark = *]
	table[x expr = {\thisrowno{0}}, y expr = {\thisrowno{1}}, header = false]  {gap.dat};
\addplot+[fill opacity = 0.2, mark = x]
	table[x expr = {\thisrowno{0}}, y expr = {\thisrowno{3}}, header = false]  {gap.dat};
\addplot+[fill opacity = 0.2, mark = +]
	table[x expr = {\thisrowno{0}}, y expr = {\thisrowno{4}}, header = false]  {gap.dat};

\legend{$\rpart{G_2}$, $\rpart{G_3}$, $\rind{G_2}$}

\pgfplotsextra{\yafdrawaxis{0}{0.8}{0}{60}}
\end{axis}
\end{tikzpicture}
\begin{tikzpicture}[baseline]
\begin{axis}[xlabel={gap probability}, ylabel= {$\rpart{G}$},
    height = 2.8cm,
    width = 2.8cm,
    scale only axis,
    cycle list name=yaf,
	every axis plot post/.style = {},
	ymin = 0,
	every axis legend/.append style = {at = {(0, 0.1)}, anchor = south west}
    ]
\addplot+[mark size = 1, fill opacity = 0.2, mark = *, mark options = {line width = 0.5pt}]
	table[x expr = {\thisrowno{0}}, y expr = {\thisrowno{2}}, header = false]  {gap.dat};

\legend{$\rpart{G_1}$}

\pgfplotsextra{\yafdrawaxis{0}{0.8}{0}{660}}
\end{axis}
\end{tikzpicture}
\caption{Episodes and their ranks in \emph{Gap} datasets. The plots for $G_2$ represent an average of 10 different episodes, each of them having different noise event $x$.
The scales of $y$-axis of plots are different.  The values of $\rind{G_1}$ and $\rind{G_3}$ were outside floating point range.}
\label{fig:gap}
\end{figure}
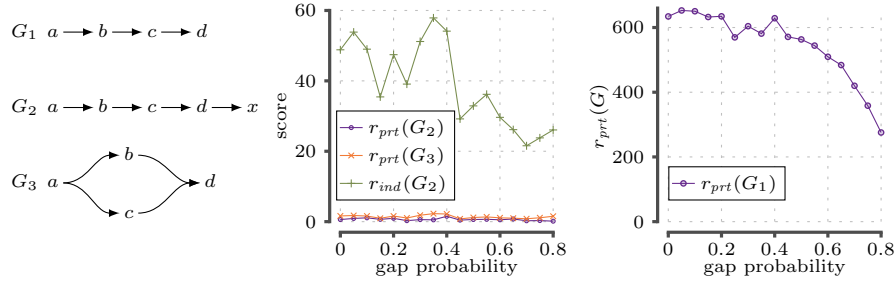
Let us now look at the top episodes in \emph{Plant} dataset, given in Table~\ref{tab:plant}. 
The top episode having the largest
$\rind{G}$ is the planted serial episode of 4 vertices $a \to b \to c \to d$.
The second episode is the planted general episode.
However, the next 5 episodes are of form $a \to b \to c \to d \to x$, where $x$
is a noise label. These episodes have abnormally high support because of the
original high support of the planted pattern. The 8th episode according to
$\rind{G}$ is the second planted episode, namely $e \to f$.  Let us now look at
the top episodes according to $\rpart{G}$. The top 3 episodes
are the planted episodes. The remaining episodes are either parallel episodes
or serial episodes containing 2 events or episodes
of form $a \to b \to c \to d \to x$, where $x$ is a noise label.
There is a clear difference between the score values. While the rank for the first
three episodes was $78$--$10^{308}$, the ranks for the
remaining episodes varied between $0$ and $14$. 
In other words, $\rpart{}$ successfully downgraded the freerider episodes
that had significant $\rind{}$. Some of the freerider episodes still have a significantly
large rank. This is due to the multiple hypothesis phenomenon: if we test large amount of patterns,
then some of them will have abnormal support just by chance.

We observe similar behaviour in \emph{Plant2} dataset. The top-8 episodes in
\emph{Plant2} according to $\rpart{}$ are the 2 planted serial episodes (ranked
as 2nd and 4th) and the 6 serial subepisodes with 2 vertices. The ranks of these episodes
were $1205$--$3704$. The remaining episodes
were ranked between $0$--$15$. On the other hand, $\rind{G}$ ranked the 2 planted patterns as top-2 episodes.
The next 1089 episodes contained either vertices from both patterns,
or several vertices from one pattern and one noise event. These episodes had ranks $19$--$3704$.
Episodes $G_1 = a \to b \to c \to d \to e \to f$ and $G_2 = (a \to b \to c), (d \to e \to f)$
had ranks $\rind{G_1} = 212$ (17th) and $\rind{G_2} = 289$ (14th) whereas the partition model gave the ranks $\rpart{G_1} = 4.8$ (5575th) and $\rpart{G_2} = 0.0005$ (45284th).
The remaining episodes
were ranked between $0$--$15$.

\begin{table}
\caption{Top episodes in \emph{Plant} dataset. The symbols $x$ and $y$ represent
noise events. The rank for the first episode with respect to the independence
model is outside the floating point range. }
\label{tab:plant}
\begin{tabular*}{\columnwidth}{@{\extracolsep{\fill}}lrrrlrr}
\toprule
\multicolumn{3}{l}{Independence model} &&
\multicolumn{3}{l}{Partition model} \\
\cmidrule{1-3} \cmidrule{5-7}
Rank & Episode type & $\rind{G}$ &&
Rank & Episode type & $\rpart{G}$ \\
\midrule
1.  & $a \epito b \epito c \epito d$ & $\infty$ &&
1.  & $a \epito b \epito c \epito d$ & $10^{308}$ \\

2. & \begin{tikzpicture}[baseline]
\node[smallnode] (k1) at (0, 0) {$k$};
\node[smallnode] at (0.8, -0.15) (k2) {$m$};
\node[smallnode] at (0.8, 0.15) (k3) {$n$};
\node[smallnode, base right = 1.4cm of k1] (k4) {$l$};

\draw (k1.mid east) edge[smalledge, out = 0, in = 180] (k2.mid west);
\draw (k1.mid east) edge[smalledge, out = 0, in = 180] (k3.mid west);
\draw (k3.mid east) edge[smalledge, out = 0, in = 180] (k4.mid west);
\draw (k2.mid east) edge[smalledge, out = 0, in = 180] (k4.mid west);
\end{tikzpicture} & 249 &&
2.  & $e \epito f$ & $128$ \\

3.--7.  & $a \epito b \epito c \epito d \epito x$ & $184$--$185$ &&

3. &
\begin{tikzpicture}[baseline]
\node[smallnode] (k1) at (0, 0) {$k$};
\node[smallnode] at (0.8, -0.15) (k2) {$m$};
\node[smallnode] at (0.8, 0.15) (k3) {$n$};
\node[smallnode, base right = 1.4cm of k1] (k4) {$l$};

\draw (k1.mid east) edge[smalledge, out = 0, in = 180] (k2.mid west);
\draw (k1.mid east) edge[smalledge, out = 0, in = 180] (k3.mid west);
\draw (k3.mid east) edge[smalledge, out = 0, in = 180] (k4.mid west);
\draw (k2.mid east) edge[smalledge, out = 0, in = 180] (k4.mid west);
\end{tikzpicture} & 78\\

8. & $e \epito f$ & $128$ &&
4.--  & $a \epito b \epito c \epito d \epito x$  & $0$--$14$ \\

9.-- &  $x \epito y$\quad or\quad $x, y$ & $2$--$14$ 

&&
& or\quad $x \epito y$\quad or\quad $x, y$ \\

\bottomrule
\end{tabular*}
\end{table}

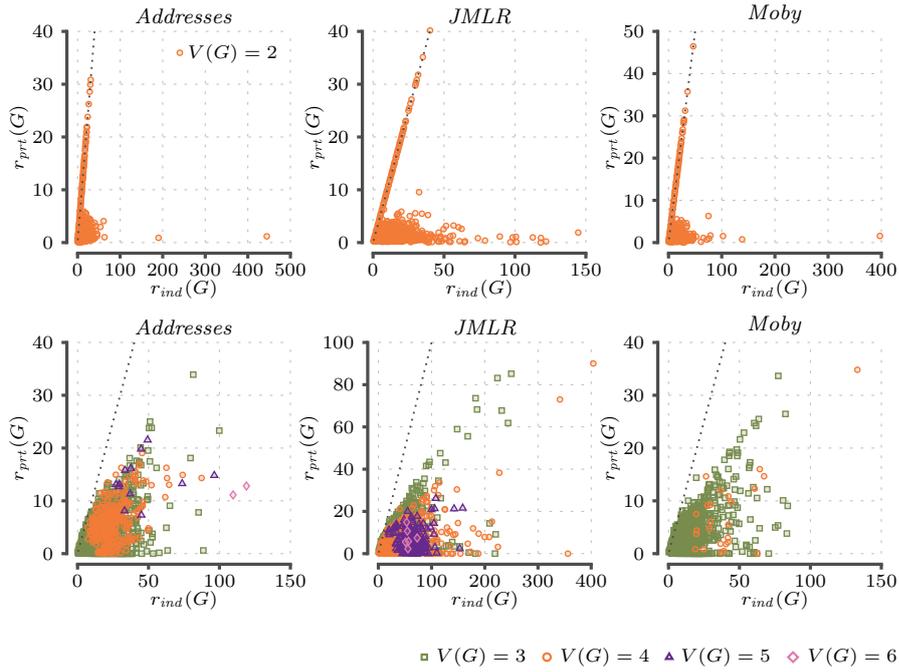
\begin{figure}[ht!]
\newlength{\imgsize}
\setlength{\imgsize}{2.8cm}
\begin{tikzpicture}[baseline]
\begin{axis}[xlabel={$\rind{G}$}, ylabel= {$\rpart{G}$},
	title = {\emph{Addresses}},
    height = \imgsize,
    width = \imgsize,
    scale only axis,
    cycle list name=yaf,
	every axis plot post/.style = {},
    xmax = 500, xmin = 0, ymax = 40, ymin = 0,
	xtick = {0, 100,..., 500},
	legend entries = {$V(G) = 2$},
	legend style = {draw = none},
	clip mode = individual
    ]
\addplot[only marks, mark size = 1, fill opacity = 0.2, mark = *, yafcolor2, mark options = {line width = 0.5pt}]
	table[x expr = {-\thisrowno{0}}, y expr = {-\thisrowno{1}}, header = false]  {addresspairsranksparse.dat};
\addplot[yafaxiscolor, line width = 0.7pt, dotted] coordinates {(40, 40) (0, 0)};
\pgfplotsextra{\yafdrawaxis{0}{500}{0}{40}}
\end{axis}
\end{tikzpicture}\hspace{-5mm}\hfill%
\begin{tikzpicture}[baseline]
\begin{axis}[xlabel={$\rind{G}$}, ylabel= {$\rpart{G}$},
	title = {\emph{JMLR}},
    width = \imgsize,
    height = \imgsize,
    scale only axis,
    cycle list name=yaf,
	every axis plot post/.style = {},
    xmax = 150, xmin = 0, ymax = 40, ymin = 0,
	clip mode = individual
    ]
\addplot[only marks, mark size = 1, fill opacity = 0.2, mark = *, yafcolor2, mark options = {line width = 0.5pt}]
	table[x expr = {-\thisrowno{0}}, y expr = -\thisrowno{1}, header = false]  {jmlrpairsranksparse.dat};
\addplot[yafaxiscolor, line width = 0.7pt, dotted] coordinates {(40, 40) (0, 0)};
\pgfplotsextra{\yafdrawaxis{0}{150}{0}{40}}
\end{axis}
\end{tikzpicture}\hspace{-5mm}\hfill%
\begin{tikzpicture}[baseline]
\begin{axis}[xlabel={$\rind{G}$}, ylabel= {$\rpart{G}$},
	title = {\emph{Moby}},
    width = \imgsize,
    height = \imgsize,
    scale only axis,
    cycle list name=yaf,
	every axis plot post/.style = {},
    xmax = 400, xmin = 0, ymax = 50, ymin = 0,
	ytick = {0, 10,..., 50},
	clip mode = individual
    ]
\addplot[only marks, mark size = 1, fill opacity = 0.2, mark = *, yafcolor2, mark options = {line width = 0.5pt}]
	table[x expr = {-\thisrowno{0}}, y expr = -\thisrowno{1}, header = false]  {mobypairsranksparse.dat};
\addplot[yafaxiscolor, line width = 0.7pt, dotted] coordinates {(50, 50) (0, 0)};
\pgfplotsextra{\yafdrawaxis{0}{400}{0}{50}}
\end{axis}
\end{tikzpicture}

\begin{tikzpicture}[baseline]
\begin{axis}[xlabel={$\rind{G}$}, ylabel= {$\rpart{G}$},
	title = {\emph{Addresses}},
    height = \imgsize,
    width = \imgsize,
    scale only axis,
    cycle list name=yaf,
	every axis plot post/.style = {},
    scatter/classes = {3={yafcolor3, mark=square*},4={yafcolor2, mark=*},5={yafcolor1, mark=triangle*, mark size=1.5},6={yafcolor8, mark=diamond*, mark size=1.5}},
    xmax = 150, xmin = 0, ymax = 40, ymin = 0,
	legend to name = {lgd:scatter},
	legend style = {draw = none},
	legend columns = 4,
	legend entries = {$V(G) = 3$, $V(G) = 4$, $V(G) = 5$, $V(G) = 6$},
	clip mode = individual
    ]
\addplot[scatter, scatter src = explicit symbolic, only marks, mark size = 1, fill opacity = 0.2, mark options = {line width = 0.5pt}]
	table[x expr = {-\thisrowno{0}}, y expr = -\thisrowno{1}, meta index = 2, header = false] {addressranksparse.dat};
\addplot[yafaxiscolor, line width = 0.7pt, dotted, forget plot] coordinates {(40, 40) (0, 0)};
\pgfplotsextra{\yafdrawaxis{0}{150}{0}{40}}
\end{axis}
\end{tikzpicture}\hspace{-5mm}\hfill%
\begin{tikzpicture}[baseline]
\begin{axis}[xlabel={$\rind{G}$}, ylabel= {$\rpart{G}$},
	title = {\emph{JMLR}},
    height = \imgsize,
    width = \imgsize,
    cycle list name=yaf,
    scale only axis,
	every axis plot post/.style = {},
    scatter/classes = {6={yafcolor8, mark=diamond*, mark size=1.5},2={yafcolor2},3={yafcolor3, mark=square*},4={yafcolor2, mark=*},5={yafcolor1, mark=triangle*, mark size=1.5}},
    xmax = 400, xmin = 0, ymax = 100, ymin = 0,
	clip mode = individual
    ]
\addplot[scatter, scatter src = explicit symbolic, only marks, mark size = 1, fill opacity = 0.2, mark = *, mark options = {line width = 0.5pt}]
	table[x expr = {-\thisrowno{0}}, y expr = -\thisrowno{1}, meta index = 2, header = false]  {jmlrranksparse.dat};
\addplot[yafaxiscolor, line width = 0.7pt, dotted] coordinates {(100, 100) (0, 0)};
\pgfplotsextra{\yafdrawaxis{0}{400}{0}{100}}
\end{axis}
\end{tikzpicture}\hspace{-5mm}\hfill%
\begin{tikzpicture}[baseline]
\begin{axis}[xlabel={$\rind{G}$}, ylabel= {$\rpart{G}$},
	title = {\emph{Moby}},
    width = \imgsize,
    height = \imgsize,
    scale only axis,
    cycle list name=yaf,
	every axis plot post/.style = {},
    scatter/classes = {6={yafcolor4, mark=diamond*, mark size=1.5},2={yafcolor2},3={yafcolor3, mark=square*},4={yafcolor2, mark=*},5={yafcolor1, mark=triangle*, mark size=1.5}},
    xmax = 150, xmin = 0, ymax = 40, ymin = 0,
	clip mode = individual
    ]
\addplot[scatter, scatter src = explicit symbolic, only marks, mark size = 1, fill opacity = 0.2, mark = *, mark options = {line width = 0.5pt}]
	table[x expr = {-\thisrowno{0}}, y expr = -\thisrowno{1}, meta index = 2, header = false] {mobyranksparse.dat};
\addplot[yafaxiscolor, line width = 0.7pt, dotted] coordinates {(40, 40) (0, 0)};
\pgfplotsextra{\yafdrawaxis{0}{150}{0}{40}}
\end{axis}
\end{tikzpicture}

\begin{flushright}
\begin{tikzpicture}
\node[draw, rectangle, yafcolor3, thick, inner sep = 1pt] (l1) {};
\node[right = 0pt of l1, font=\scriptsize] {$V(G) = 3$};

\node[draw, circle, yafcolor2, thick, inner sep = 1pt, right = 1.5cm of l1] (l2) {};
\node[right = 0pt of l2, font=\scriptsize] {$V(G) = 4$};

\node[draw, regular polygon, regular polygon sides = 3, yafcolor1, thick, inner sep = 0.5pt, right = 1.5cm of l2] (l3) {};
\node[right = 0pt of l3, font=\scriptsize] {$V(G) = 5$};

\node[draw, diamond, yafcolor8, thick, inner sep = 1pt, right = 1.5cm of l3] (l4) {};
\node[right = 0pt of l4, font=\scriptsize] {$V(G) = 6$};

\end{tikzpicture}
\end{flushright}
\caption{Partition model ranks $\rpart{G}$ as a function of $\rind{G}$ for text
datasets. The top row contains parallel episodes with 2 vertices. The bottom row contains
episodes with more than two vertices. The ranges of axis vary from figure to figure.}
\label{fig:scatter}
\end{figure}

\begin{table}
\caption{Kendall-$\tau$ coefficients of episodes ranked by $\rpart{}$ and $\rind{}$.}
\label{tab:kendall}
\begin{tabular*}{\columnwidth}{@{\extracolsep{\fill}}lrrr}
\toprule
Dataset & All & parallel, $V(G) = 2$ & $V(G) > 3$ \\
\midrule
\emph{Addresses} & 0.61 & 0.60 & 0.42 \\
\emph{JMLR} & 0.54 & 0.62 & 0.45 \\
\emph{Moby} & 0.66 & 0.59 & 0.38 \\
\bottomrule
\end{tabular*}
\end{table}

SQS miner~\citep{tatti:12:long} discovered the planted serial episode in all
\emph{Gap} datasets. In \emph{Plant} SQS discovered the two planted serial
episodes, but not the general planted episode, since SQS discovers only serial
episodes. Instead, SQS found the two serial superepisodes $k \to n \to m \to l$
and $k \to m \to n \to l$.

Let us now consider episodes discovered from text datasets. In
Figure~\ref{fig:scatter} we plot $\rpart{G}$ as a function of $\rind{G}$.  We
highlight parallel episodes with $2$ vertices by plotting them separately in
the top row while the bottom row contains the episodes with more than two
vertices. Note that we omitted serial episodes of size $2$ since both ranks
will produce an equal score, $\rind{G} = \rpart{G}$, since there are no proper
superepisodes for an episode $G$ and the only prefix partition is actually
equal to the independence model.

The results demostrate that $\rpart{G}$ is typically much smaller than
$\rind{G}$. This implies that there are lot of patterns whose abnormally high support can be
justified by a partition model.
In the top
row of Figure~\ref{fig:scatter} we see that the parallel episodes of size 2 are typically
considered redundant by $\rpart{G}$ because typically the serial counterpart
of the episode can explain well the behaviour of the parallel episode.  For
certain parallel episodes, the rank remains the same by design as there are no
serial counterpart episodes in the mined collection.

The Kendall-$\tau$ coefficients given in Table~\ref{tab:kendall} imply that 
$\rpart{}$ and $\rind{}$ are correlated. The correlation is weaker for larger
episodes than for parallel episodes of size 2. This is because $\rpart{G} =
\rind{G}$ if $G$ is a parallel episode of size $2$ and does not have a frequent
serial episode.

The top episodes according to $\rpart{G}$, given in Table~\ref{tab:top10} in
the text datasets were short serial episodes of words that occur often
together.  This is an expected result as these episodes represent common
expressions. For comparison, the top-10 patterns obtained by SQS are given in Table~\ref{tab:topsqs}.
While serial episodes are favored by $\rpart{G}$, there are non-serial
episodes that have high rank, for example, $G_1 = $ \emph{east}, \emph{west} in \emph{Addresses}
has rank $\rpart{G_1} = 30$ (42nd), and $G_2 = $ (\emph{subgroup}$\to$\emph{discoveri}), \emph{rule}
in \emph{JMLR} has rank $\rpart{G_2} = 25$ (376th).

\tikzexternaldisable
\begin{table}
\caption{Top-10 episodes according to $\rpart{G}$ and $\rind{G}$.}
\label{tab:top10}
\setlength{\tabcolsep}{0pt}
\begin{tabular*}{\textwidth}{@{\extracolsep{\fill}}lllllll}
\toprule
\multicolumn{2}{l}{ranked by $\rind{G}$} & $\rind{}$ & $\rpart{}$ & ranked by $\rpart{G}$ & $\rind{}$ & $\rpart{}$ \\
\cmidrule(r{2pt}){1-4}\cmidrule{5-7}
\multicolumn{2}{l}{\emph{Addresses}}\\
\cmidrule(r{2pt}){1-4}\cmidrule{5-7}

1.&unit\epitoc state &
931 & 931 &
unit\epitoc state &
931 & 931
\\

2.&unit state &
445 & 1.2 &
fellow\epitoc citizen &
256 & 256
\\

3.&fellow\epitoc citizen &
256 & 256 &
constitut\epitoc state &
97 & 97
\\

4.&fellow citizen &
190 & 0.9 &
four\epitoc year  &
79 & 79
\\

5.&
\begin{tikzpicture}[baseline]
\node[mediumnode] (n0) {preserv};
\node[mediumnode, base right = 0.4cm of n0] (n1) {protect};
\node[mediumnode, base right = 0.4cm of n1] (n2) {defend};
\node[mediumnode] (n3) at (0.8, -0.4) {constitut};
\node[mediumnode, base right = 0.4cm of n3] (n4) {unit};
\node[mediumnode, base right = 0.4cm of n4] (n5) {state};

\draw (n0.mid east) edge[smalledge, out = 0, in = 180] (n1.mid west);
\draw (n1.mid east) edge[smalledge, out = 0, in = 180] (n2.mid west);
\draw (n3.mid east) edge[smalledge, out = 0, in = 180] (n4.mid west);
\draw (n4.mid east) edge[smalledge, out = 0, in = 180] (n5.mid west);
\draw (n2) edge[smalledge, out = 235, in = 10, looseness = 0.3] (n3);
\end{tikzpicture} &

119 & 13 &
men\epitoc women &
77 & 77
\\

6.&
\begin{tikzpicture}[baseline]
\node[mediumnode, inner sep = 0pt] (n0) {best};
\node[mediumnode, base right = 0.35cm of n0, inner sep = 0pt] (n1) {abil};
\node[mediumnode, base right = 0.35cm of n1, inner sep = 0pt] (n2) {preserv};
\node[mediumnode, base right = 0.35cm of n2, inner sep = 0pt] (n3) {protect};
\node[mediumnode, below = 0.0cm of n3, inner xsep = 0pt] (n4) {defend};
\node[mediumnode] (n5) at (0.8, -0.3) {constitut};

\draw (n0.mid east) edge[smalledge, out = 0, in = 180] (n1.mid west);
\draw (n1.mid east) edge[smalledge, out = 0, in = 180] (n2.mid west);
\draw (n2.mid east) edge[smalledge, out = 0, in = 180] (n3.mid west);
\draw (n3) edge[smalledge, out = 0, in = 0, looseness = 2] (n4);
\end{tikzpicture} &

110 & 11 &
year\epitoc ago &
75 & 75
\\

7.&constitut\epitoc unit\epitoc state &
100 & 23 &
armi\epitoc navi &
63 & 63
\\

8.&constitut\epitoc state &
97 & 97 &
north\epitoc south  &
53 & 53
\\

9.&
\begin{tikzpicture}[baseline]
\node[mediumnode] (n0) {preserv};
\node[mediumnode, base right = 0.4cm of n0] (n1) {constitut};
\node[mediumnode, base right = 0.4cm of n1] (n2) {state};
\node[mediumnode] (n3) at (0.8, -0.4) {protect};
\node[mediumnode, base right = 0.4cm of n3] (n4) {defend};

\draw (n0.mid east) edge[smalledge, out = 0, in = 180] (n1.mid west);
\draw (n1.mid east) edge[smalledge, out = 0, in = 180] (n2.mid west);
\draw (n3.mid east) edge[smalledge, out = 0, in = 180] (n4.mid west);
\draw (n0) edge[smalledge, out = 270, in = 180] (n3.mid west);
\draw (n4.mid east) edge[smalledge, out = 0, in = 270] (n2);
\end{tikzpicture}

&

96 & 15 &
within\epitoc limit  &
52 & 52
\\

10.&unit\epitoc state constitut &
89 & 0.6 &
chief\epitoc magistr &
51 & 51
\\

\cmidrule(r{2pt}){1-4}\cmidrule{5-7}
\multicolumn{2}{l}{\emph{JMLR}}\\
\cmidrule(r{2pt}){1-4}\cmidrule{5-7}
1.&support\epitoc vector\epitoc machin &
$\infty$ &
357 &
support\epitoc vector &
440 &
440

\\
2.&
support\epitoc vector &
440 &
440 &
support\epitoc vector\epitoc machin &
$\infty$ &
357

\\
3.&
support\epitoc vector\epitoc machin\epitoc svm &
404 &
90 &
support\epitoc machin &
324 &
324

\\
4.&
support\epitoc vector\epitoc machin svm &
356 &
$10^{-3}$ &
vector\epitoc machin &
306 &
306

\\
5.&
reproduc\epitoc kernel\epitoc hilbert\epitoc space &
341 &
73 &
data\epitoc set &
284 &
284

\\
6.&
support\epitoc machin &
325 &
325 &
real\epitoc world &
260 &
260

\\
7.&
vector\epitoc machin  &
306 &
306 &
real\epitoc data &
213 &
213

\\
8.&
data\epitoc set &
284 &
284 &
state\epitoc art &
191 &
191

\\
9.&
real\epitoc world &
260 &
260 &
machin\epitoc learn &
190 &
190

\\
10.&
support\epitoc vector\epitoc svm &
250 &
85 &
bayesian\epitoc network &
166 &
166

\\

\cmidrule(r{2pt}){1-4}\cmidrule{5-7}
\multicolumn{2}{l}{\emph{Moby}}\\
\cmidrule(r{2pt}){1-4}\cmidrule{5-7}

1.&sperm\epitoc whale &
874 &
874 &
sperm\epitoc whale &
874 &
874
\\

2.&sperm whale &
397 &
1.6 &
mobi\epitoc dick &
359 &
359
\\

3.&mobi\epitoc dick &
359 &
359 &
old\epitoc man &
224 &
224
\\

4.&old\epitoc man &
224 &
224 &
mast\epitoc head &
186 &
186
\\

5.&mast\epitoc head &
187 &
187 &
white\epitoc whale &
179 &
179
\\

6.&white\epitoc whale &
179 &
179 &
right\epitoc whale &
131 &
131
\\

7.&head mast &
138 &
0.8 &
quarter\epitoc deck  &
96 &
96
\\

8.&seven\epitoc hundr\epitoc seventi\epitoc seventh &
133 &
35 &
captain\epitoc peleg &
86 &
86
\\

9.&right\epitoc whale &
131 &
131 &
chief\epitoc mate &
85 &
85
\\
10.&old man &
102 &
1.5 &
new\epitoc bedford &
82 &
82
\\

\bottomrule
\end{tabular*}
\end{table}

\tikzexternaldisable
\begin{table}
\caption{Top-10 episodes according to SQS.
The pattern $G$ in \emph{Moby} was a long episode,
such \epito funni \epito sporti \epito gami \epito jesti \epito joki \epito hoki \epito poki \epito lad \epito ocean, a litany repeated 3 times in the novel.}
\label{tab:topsqs}
\setlength{\tabcolsep}{0pt}
\begin{tabular*}{\textwidth}{@{\extracolsep{\fill}}lll}
\toprule
\emph{Addresses} & \emph{JMLR} & \emph{Moby} \\
\midrule

fellow \epito citizen
&
support \epito vector \epito machin
&
sperm \epito whale
\\
unit \epito state
&
machin \epito learn
&
mobi \epito dick
\\
men \epito women
&
state \epito art
&
mast \epito head
\\
feder \epito govern
&
data \epito set
&
white \epito whale
\\
self \epito govern
&
bayesian \epito network
&
old \epito man
\\
four \epito year
&
larg \epito scale
&
captain \epito ahab
\\
year \epito ago
&
nearest \epito neighbor
&
$G$
\\
american \epito peopl
&
decis \epito tree
&
quarter \epito deck
\\
vice \epito presid
&
cross \epito valid
&
right \epito whale
\\
chief \epito magistr
&
neural \epito network
&
captain \epito peleg
\\

\bottomrule
\end{tabular*}
\end{table}

Our next step is to highlight some episodes that had a high $\rind{G}$
but also ranked low by $\rpart{G}$, and vice versa. In order to do that we sorted episodes based on
\begin{equation}
\label{eq:prop}
	\rho(G) = \frac{\rind{G} - \rpart{G}}{ \rpart{G}}\quad\text{and}\quad
	\eta(G) = \frac{\rpart{G} - \rind{G}}{ \rind{G}}
	\quad.
\end{equation}
The top episodes should have large $\rind{G}$ and $\rpart{G}$ close to $0$.  In
Figure~\ref{fig:bottom5}, we listed top-$5$ episodes from each text dataset.
Many of these episodes contain a true pattern, such as,  \emph{united} $\to$ \emph{states}
or \emph{support} $\to$ \emph{vector} $\to$ \emph{machine} augmented with a common
event, seemingly independent event, such as, \emph{world} or \emph{regression}.
Let us now compare the top-$5$ episodes with large $\eta(G)$. Unlike with $\rho(G)$, this list is dominated
with episodes for which $\supp{G} < \mu_{\text{part}} < \mu_{\text{ind}}$, that is, both methods
overestimate the actual support but the partition model is more correct. To make $\eta(G)$ more  
meaningful, we considered only episodes for which the partition model underestimated the support, $\supp{G} \geq \mu_{\text{part}}$,
given in Figure~\ref{fig:revbottom5}.
We see that the differences between $\rpart{G}$ and $\rind{G}$ are small in Figure~\ref{fig:revbottom5} and large in Figure~\ref{fig:bottom5}.

\begin{figure}[t]
\setlength{\tabcolsep}{0pt}
\emph{Addresses:}\\
\begin{tabular*}{\textwidth}{@{\extracolsep{\fill}}lll}
$G_1$: unit \epito state \epispace world 
&
$G_2$: unit \epito state \epispace shall 
&
\begin{tikzpicture}[baseline]
\node[anchor = base, inner sep = 0pt] (n0) at (-0.3, 0) {$G_3$:};
\node[anchor = base west] (n1) {state};
\node[above = 0.3cm of n1.mid west, anchor = mid west] (n2) {unit};
\node[below right = 0.15cm and 0.5cm of n2.base east, anchor = base west] (n3) {world};
\draw[smalledge, in = 170, out = 0] (n2) edge (n3);
\draw[smalledge, in = 190, out = 0] (n1) edge (n3);
\end{tikzpicture} 
\\
\qquad 9.0  &
\qquad 9.2  &
\qquad 11.4 
\\

\begin{tikzpicture}[baseline]
\node[anchor = base, inner sep = 0pt] (n0) at (-0.3, 0) {$G_4$:};
\node[anchor = base west] (n1) {state};
\node[above = 0.3cm of n1.mid west, anchor = mid west] (n2) {unit};
\node[below right = 0.15cm and 0.5cm of n2.mid east, anchor = mid west] (n3) {peac};
\draw[smalledge, in = 170, out = 0] (n2) edge (n3);
\draw[smalledge, in = 190, out = 0] (n1) edge (n3);
\end{tikzpicture}& 

$G_5$: unit \epito state \epispace peac\\
\qquad 17.8  &
\qquad 17.0 
\\
\end{tabular*}\\[0.5mm]

\emph{JMLR:}\\[0.2mm]
\begin{tabular*}{\textwidth}{@{\extracolsep{\fill}}ll}
$G_1$: support \epito vector \epito machin \epispace regress
&
\begin{tikzpicture}[baseline]
\node[anchor = base, inner sep = 0pt] (n0) at (-0.25, 0) {$G_2$:};
\node[anchor = base west] (n1) {support};
\node[right = 0.5cm of n1.mid east, anchor = mid west] (n2) {vector};
\node[right = 0.5cm of n2.mid east, anchor = mid west] (n3) {machin};
\node[above right = 0.3cm and 0.5cm of n1.mid east, anchor = mid west] (n4) {regress};
\draw[smalledge] (n1.mid east) -- (n2.mid west);
\draw[smalledge] (n2.mid east) -- (n3.mid west);
\draw[smalledge] (n1) edge[in = 180, looseness = 0.5] (n4.mid west);
\end{tikzpicture} \\
\qquad 95.1
&
\qquad 90.4
\\

$G_3$: support \epito vector \epito machin \epispace number
&

\begin{tikzpicture}[baseline]
\node[anchor = base, inner sep = 0pt] (n0) at (-0.25, 0) {$G_4$:};
\node[anchor = base west] (n1) {support};
\node[right = 0.5cm of n1.mid east, anchor = mid west] (n2) {vector};
\node[right = 0.5cm of n2.mid east, anchor = mid west] (n3) {machin};
\node[above right = 0.3cm and 0.5cm of n2.mid east, anchor = mid west] (n4) {regress};
\draw[smalledge] (n1.mid east) -- (n2.mid west);
\draw[smalledge] (n2.mid east) -- (n3.mid west);
\draw[smalledge] (n2) edge[in = 180, looseness = 0.5] (n4.mid west);
\end{tikzpicture} \\
\qquad 52.0
&
\qquad 86.4
\\

$G_5$: support \epito vector \epito machin \epispace space\\
\qquad 51.6
\\
\end{tabular*}\\[0.5mm]

\emph{Moby:}\\[0.2mm]
\begin{tabular*}{\textwidth}{@{\extracolsep{\fill}}lll}
$G_1$: sperm \epito whale \epispace thing
&
$G_2$: sperm \epito whale \epispace ship

\\
\qquad 13.4
&
\qquad 21.3
\\
$G_3$: sperm \epispace whale \epito ship
&\begin{tikzpicture}[baseline]
\node[anchor = base, inner sep = 0pt] (n0) at (-0.25, 0) {$G_4$:};
\node[anchor = base west] (n1) {sperm};
\node[right = 0.6cm of n1.base east, anchor = base west] (n2) {ship};
\node[above right = 0.3cm and 0.6cm of n1.base east, anchor = base west] (n3) {whale};
\draw (n1.mid east) edge[smalledge] (n2.mid west);
\draw (n1) edge[smalledge, in = 180, looseness = 0.5] (n3);
\end{tikzpicture} &
$G_5$: sperm \epispace whale \epispace water \\
\qquad 7.1
&
\qquad 9.5
&
\qquad 14.6
\\

\end{tabular*}
\caption{Episodes with high rank $\rind{G}$ but considered redundant by $\rpart{G}$. Top-5 episodes
based on $\rho(G)$ given in Eq.~\ref{eq:prop}. The given numbers are $\rind{G_i}$, whereas the partition model ranks are $\rpart{G_i} \leq 10^{-6}$.}
\label{fig:bottom5}
\end{figure}

\begin{figure}[t]
\setlength{\tabcolsep}{0pt}
\emph{Addresses:}\\[0.2mm]
\begin{tabular*}{\textwidth}{@{\extracolsep{\fill}}lll}
$G_1$: 
govern \epispace great \epispace world &
$G_2$: 
govern \epispace great \epispace peac &
$G_3$: 
countri \epispace nation \epispace govern
\\
\qquad 1.33 / 1.09 &
\qquad 3.44 / 3.00 &
\qquad 1.16 / 1.02  \\
$G_4$: 
such \epispace nation \epispace peopl &
$G_5$: 
nation \epispace govern \epispace made
\\
\qquad 1.46 / 1.35 &
\qquad 1.62 / 1.52 \\
\end{tabular*}\\[0.5mm]

\emph{JMLR:}\\[0.2mm]
\begin{tabular*}{\textwidth}{@{\extracolsep{\fill}}lll}
$G_1$:
algorithm \epispace show \epispace featur &
$G_2$:
result \epito model \epispace algorithm &

\begin{tikzpicture}[baseline]
\node[anchor = base, inner sep = 0pt] (n0) at (-0.25, 0) {$G_3$:};
\node[anchor = base west] (n1) {show};
\node[above right = 0.3cm and 1.15cm of n1.mid west, anchor = mid west] (n2) {algorithm};
\node[right = 0.5cm of n1.mid east, anchor = mid west] (n3) {problem};
\draw (n1) edge[smalledge, in = 180, looseness = 0.5] (n2);
\draw[smalledge] (n1) edge (n3);
\end{tikzpicture} \\
\qquad 1.07 / 0.36 &
\qquad 1.08 / 0.53 &
\qquad 1.19 / 0.77\\

$G_4$:
algorithm \epispace data \epispace obtain &
$G_5$:
model \epispace train \epispace result &
\\
\qquad 1.08 / 0.75 &
\qquad 1.47 / 1.21\\
\end{tabular*}\\[0.5mm]

\emph{Moby:}\\[0.2mm]
\begin{tabular*}{\textwidth}{@{\extracolsep{\fill}}lll}
$G_1$ :
round \epispace old \epispace whale &
$G_2$ :
now \epispace though \epispace whale &
$G_3$ :
now \epispace round \epispace whale \\

\qquad 4.44 / 3.72 & 
\qquad 1.68 / 1.62 & 
\qquad 1.84 / 1.79 \\

$G_4$ :
out \epispace over \epispace whale &
$G_5$ :
now \epispace whale \epispace good \\

\qquad 1.49 / 1.45 &
\qquad 2.26 / 2.21 \\
\end{tabular*}

\caption{Top-5 episodes based on $\eta(G)$ given in Eq.~\ref{eq:prop}, for which partition model underestimated the support. The number format is $x / y$, where  $x = \rpart{G_i}$ and $y = \rind{G_i}$.}
\label{fig:revbottom5}
\end{figure}

Many of downgraded episodes are \emph{parallel} episodes, for example,
(\emph{united}, \emph{states}), see Table~\ref{tab:top10}.  While the
independence model ranks them high, the elevated support of a \emph{serial}
episode \emph{united} $\to$ \emph{states} explains well the elevated support of
this parallel episode since these words occur almost always in this particular
order. This makes the partition model based on the superepisodes to give this
episode a low score.

\begin{figure}[t]
\begin{tikzpicture}[baseline]
\begin{axis}[xlabel={$\abs{E(\mach{G})}$}, ylabel= {time (ms)},
    height = 2.8cm,
    width = 10cm,
    scale only axis,
    cycle list name=yaf,
	every axis plot post/.style = {},
    xmax = 80, xmin = 1, ymax = 18, ymin = 0,
	legend entries = {upper quartile, median, lower quartile},
	legend pos = south east
    ]

\addplot[only marks, mark size = 1.5, fill opacity = 0.2, mark = o, yafcolor3, mark options = {line width = 0.5pt}]
	table[x index = 0, y index = 5, header = false]  {jmlrtime.dat};
\addplot[only marks, fill opacity = 0.2, mark = +, yafcolor1, mark options = {line width = 0.5pt}]
	table[x index = 0, y index = 3, header = false]  {jmlrtime.dat};
\addplot[only marks, fill opacity = 0.2, mark = x, yafcolor2, mark options = {line width = 0.5pt}]
	table[x index = 0, y index = 4, header = false]  {jmlrtime.dat};

\pgfplotsextra{\yafdrawaxis{1}{80}{0}{18}}
\end{axis}
\end{tikzpicture}
\caption{Time needed to compute the rank for patterns obtained from \emph{JMLR} as a function of the number of edges in $\mach{G}$.}
\label{fig:times}
\end{figure}
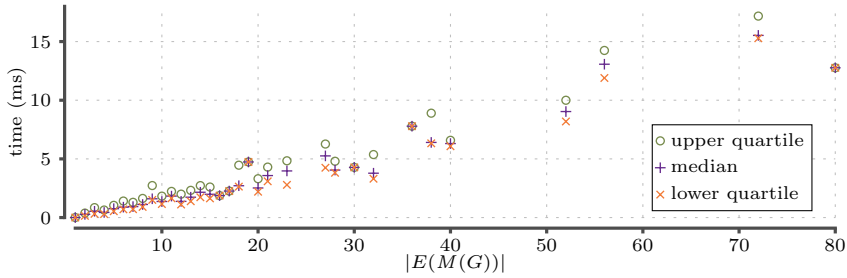

Finally, let us consider running times that are given in Table~\ref{tab:basic}.
We see that we can rank large amount of episodes in a short period of time.
Ranking $50\,000$ episodes took us less than a minute. To obtain a more
detailed picture, we present running times as a function of $\abs{E(\mach{G})}$
in Figure~\ref{fig:times} for \emph{JMLR} episodes. We see that
the more complex episode (the largest episode contained 5 nodes), the longer it takes to rank.
This suggests that
while there are complicated steps in computing the support that may even result
in exponentially large structures, in practice ranking can be done efficiently.

\section{Concluding remarks}\label{sec:conclusions}
In this paper we introduced ranking episodes based on a partition model.  Such
a ranking reduces redundancy among episodes by ranking episodes low if they can
be explained by either two subepisodes or by a more strict episode.

To construct the model we first constructed a finite state machine that is used
for computing the expected support for the independence model. We then modified
the probabilities of some of the transitions. These transitions are selected
based on which subepisodes we are considering. We compare this model to the
independence model and show that for our experiments the model reduces
redundancy in patterns.

The effectiveness of the partition model relies on the assumption that the two
subepisodes (or the superepisode) have few gaps. This causes the parameters
$t_1$ and $t_2$ to be large. If this assumption does not hold, that is, $t_1
\approx t_2 \approx 0$, then the partition model will reduce to the independence
model. While this assumption is natural and reasonable, in a setup where
episodes are frequent but have large gaps, this approach will not reduce
redundancy.  In such a setup, a different approach is needed, a potential
direction for a future line of work.

When partitioning an episode into two subepisodes, we did not consider all the
possible partitions. Instead, we only considered partitions arising from prefix
graphs. While these partitions are a natural subclass of all possible partitions,
this restriction leads to some limitations. For example, we do not
partition a serial episode $a \to b \to c \to d$ to $a \to c$ and $b \to d$.
However, note that for many episodes, every partition is a partition arising
from a prefix graph. This is the case with any parallel episode.  We should
point out that from technical point of view, we can use non-prefix partitions.
However, as demonstrated in Example~\ref{ex:nonprefix} and
Proposition~\ref{prop:nonprefix}, a partition model may not take
properly into account the lack of gaps in a non-prefix subepisode.
Developing a technique
that properly takes interleaving subepisodes into account is an interesting
direction for a future work.

Instead of using just the partition model to rank episodes, it may be
advantageous to combine it with other ranking method. For example, one approach
would be to rank the episodes using the partition model, select top-$k$
episodes, and rerank them based on the independence model. The number $k$ can
be given explicitly or determined by interpreting the rank as a $p$-value, and
filtering the episodes based on a given significance level. In the latter
approach some extra steps are needed, such as adjusting for the multiple
hypotheses testing. This can be done either with direct adjustment or a holdout
approach as described by~\citet{webb:07:significant}.
Strictly speaking, interpreting rank as a $p$-value requires that we know the
exact model parameters which is uncommon.  Consequently, in practice and in
this work we estimate these parameters, and by doing so estimate the true
$p$-value, by finding the maximum likelihood estimates.

This work opens several future lines of research. One straightforward extension
is to combine the partition approach with a markov model suggested
by~\citet{gwadera:05:markov}. A more intriguing extension is to apply this
model for a scenario where we are given one long sequence instead of a database
of sequences. In such a case, the support is either based on sliding windows of
fixed length or minimal windows. Since the instances are no longer independent,
that is, the support is no longer a sum of independent variables, it is likely
that we cannot apply the model directly. However, it may be possible to rank
episodes by using some other statistic than a support.
Table~\ref{tab:top10} for \emph{JMLR} shows that we can still reduce redundancy among the top patterns.
One fruitful approach would be developing a pattern set miner for general
episodes. A potential starting point for such a miner could be SQS
miner~\citep{tatti:12:long}, a pattern set miner for serial episodes.

%
\bibliographystyle{abbrvnat}
\bibliography{abbreviations,bibliography}  
%
\appendix

\section{Proof of Proposition~\lowercase{\ref{prop:greedy}}}
In order to prove the proposition we need the following lemma
which we will state without the proof.

\begin{lemma}
Assume that a
sequence $S = s_1, \ldots, s_n$ covers an episode $G$. If there is a source
vertex $v$ such that $s_1 = \lab{v}$, then $s_2, \ldots, s_n$ covers $G
\setminus v$.  Otherwise, $s_2, \ldots, s_n$ covers $G$. 
\end{lemma}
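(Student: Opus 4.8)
The plan is to work directly with the injective witness map guaranteed by Definition~\ref{def:cover}: since $S$ covers $G$, I fix an injective $\funcdef{m}{V(G)}{\set{1, \ldots, n}}$ that honors labels and edges, and in each case I would build a witness for the shorter sequence $S' = s_2, \ldots, s_n$ by decrementing the indices appearing in $m$. The only obstruction to this is that some vertex might be mapped to index $1$, so the heart of the argument is to control which vertex can sit there.

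First I would treat the case in which there is a source vertex $v$ with $\lab{v} = s_1$. Here I would show that $v$ is the \emph{only} vertex $m$ can send to $1$: if $m(w) = 1$ with $w \neq v$, then $\lab{w} = s_{m(w)} = s_1 = \lab{v}$, so $v$ and $w$ carry the same label and strictness forces an edge between them; since $v$ is a source it has no incoming edge, so the edge is $(v, w) \in E(G)$, whence $m(v) < m(w) = 1$, which is impossible. Therefore $m$ restricted to $V(G) \setminus \set{v}$ takes values in $\set{2, \ldots, n}$, and $m'(w) = m(w) - 1$ gives an injective map into $\set{1, \ldots, n-1}$. It honors labels because $s'_{m'(w)} = s_{m(w)} = \lab{w}$, and it honors the edges of $G \setminus v = G(V(G) \setminus \set{v})$ because those edges form a subset of $E(G)$ and decrementing preserves the order of indices; hence $S'$ covers $G \setminus v$.

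Next I would handle the case in which no source vertex of $G$ has label $s_1$. In that case no vertex of $G$ can be mapped to $1$ at all: if $m(w) = 1$ then $\lab{w} = s_1$, so $w$ is not a source and has an incoming edge $(u, w) \in E(G)$, giving $m(u) < m(w) = 1$, a contradiction. Thus $m$ takes values in $\set{2, \ldots, n}$, and again $m'(w) = m(w) - 1$ is an injective, label- and edge-honoring map from $V(G)$ into $\set{1, \ldots, n-1}$, so $S'$ covers $G$.

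The step I expect to be the crux---and the reason the statement comes out so cleanly---is the appeal to strictness in the first case to force the vertex at index $1$ to be exactly $v$; without strictness two distinct, unrelated source vertices could share the label $s_1$ and the naive decrement would break. Everything else is the routine verification that restricting and shifting an injective witness preserves the two conditions of Definition~\ref{def:cover}.
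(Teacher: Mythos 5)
Your proof is correct. Note that the paper itself states this lemma explicitly without proof (it is only used as a stepping stone to Proposition~\ref{prop:greedy}), so there is no in-paper argument to compare against; your direct manipulation of the covering witness is the natural route, and you correctly isolate the one place where strictness matters --- forcing any vertex mapped to index $1$ to be the (unique) source vertex labelled $s_1$, which is exactly what fails for non-strict episodes (e.g.\ an isolated $a$ together with a disjoint $a \to b$ and the sequence $a\,b\,a$). The remaining verifications (restricting the injection, shifting indices by one, and noting that the induced episode's edges are a subset of $E(G)$) are carried out correctly.
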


\begin{proof}[of Proposition~\ref{prop:greedy}]
We need to prove only ''only if'' case. Assume that $S = s_1, \ldots, s_n$
covers an episode $G$.

We will prove the proposition by induction over $n$.  Obviously, the result
holds for $n = 0$. Write $S' = s_2, \ldots, s_n$.

If there is no source vertex in $G$ with a label $s_1$, then $\greedy{M, S} =
\greedy{M, S'}$. Now the lemma implies that $S'$ covers $G$ and the induction
assumption implies that $\greedy{M, S'} = G$.

If there is a source vertex $v$ in $G$ such that $\lab{v} = s_1$, then
$\greedy{M, S} = \greedy{M, S', G(v)}$. Note that the $G(v)$ and its
descendants form exactly $\mach{H}$, where $H = G \setminus v$. That is,
$\greedy{M, S} = G$ if and only if $\greedy{M(H), S'} = H$. The lemma implies
that $S'$ covers $H$ and the induction assumption implies that $\greedy{M(H), S'} = H$
which proves the proposition.
\qed\end{proof}

\section{Proof of Proposition~\lowercase{\ref{prop:concave}}}

In order to prove the proposition we need the following proposition, which
essentially describes the properties of a log-likelihood of a log-linear model.
The proof of this proposition can be found, for example, in~\cite{Kullback59}.

\begin{proposition}
\label{prop:loglinear}
Assume that we are given a set of $k$ functions $\funcdef{T_i}{\Omega}{\reals}$, mapping
an object from some space $\Omega$ to a real number.
For $n$ real numbers, $r_1, \ldots, r_k$, define
\[
	Z(r_1, \ldots, r_k) = \sum_{\omega \in \Omega} \exp{\sum_{i = 1}^k r_iT_i(\omega)}\quad.
\]
Define a distribution
\[
	p(\omega) = \frac{\exp{\sum_{i = 1}^k r_iT_i(\omega)}}{Z(r_1, \ldots, r_k)}\quad.
\]
Let $X$ be a multiset of events from $\Omega$.  Define
\[
	c(r_1, \ldots, r_k) = \sum_{\omega \in X} \log p(\omega)\quad.
\]
Then $c$ is a concave function of $r_1, \ldots, r_k$. In fact
\[
	\frac{\partial c}{\partial  r_i} = \sum_{\omega \in X} (T_i(\omega) - \mean{p}{T_i})
\]
and
\[
	\frac{\partial c}{\partial  r_i r_j} = \abs{X}(\mean{p}{T_i}\mean{p}{T_j} - \mean{p}{T_i T_j})\quad.
\]
\end{proposition}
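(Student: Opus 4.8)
The plan is to recognize $c$ as an affine function of $(r_1,\dots,r_k)$ minus $\abs{X}$ times the log-partition function $\log Z$, and then to show that the Hessian of $\log Z$ is the covariance matrix of $(T_1,\dots,T_k)$ under $p$, which is automatically positive semidefinite. Concretely, I would first write, for each $\omega$,
\[
    \log p(\omega) = \sum_{i=1}^k r_i T_i(\omega) - \log Z(r_1,\dots,r_k),
\]
so that
\[
    c(r_1,\dots,r_k) = \sum_{\omega \in X}\sum_{i=1}^k r_i T_i(\omega) - \abs{X}\log Z(r_1,\dots,r_k).
\]
The first term is linear in $(r_1,\dots,r_k)$ and contributes nothing to the Hessian, so the whole statement reduces to showing that $\log Z$ is convex (and to computing its first two derivatives). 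Assuming $\Omega$ is finite, or more generally that $Z$ and the relevant sums converge, differentiation may be carried out termwise.

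Next I would compute the first derivatives. Direct differentiation gives
\[
    \frac{\partial \log Z}{\partial r_i} = \frac{1}{Z}\sum_{\omega \in \Omega} T_i(\omega)\exp\pr{\sum_{j} r_j T_j(\omega)} = \sum_{\omega \in \Omega} T_i(\omega)\, p(\omega) = \mean{p}{T_i},
\]
from which $\frac{\partial c}{\partial r_i} = \sum_{\omega \in X}\pr{T_i(\omega) - \mean{p}{T_i}}$ follows immediately, since the multiset $X$ has $\abs{X}$ elements (counted with multiplicity) and $\sum_{\omega\in X}\mean{p}{T_i} = \abs{X}\mean{p}{T_i}$.

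Differentiating once more, using $\partial Z/\partial r_j = Z\,\mean{p}{T_j}$ and the quotient rule, yields
\[
    \frac{\partial^2 \log Z}{\partial r_i \partial r_j} = \mean{p}{T_i T_j} - \mean{p}{T_i}\mean{p}{T_j},
\]
so that $\frac{\partial^2 c}{\partial r_i \partial r_j} = \abs{X}\pr{\mean{p}{T_i}\mean{p}{T_j} - \mean{p}{T_i T_j}}$, as claimed. To finish, I would observe that the Hessian of $c$ equals $-\abs{X}$ times the covariance matrix of the random vector $(T_1,\dots,T_k)$ under $p$; for any $a \in \reals^k$,
\[
    \sum_{i,j} a_i a_j \pr{\mean{p}{T_i T_j} - \mean{p}{T_i}\mean{p}{T_j}} = \operatorname{Var}_p\pr{\sum_i a_i T_i} \ge 0,
\]
so the covariance matrix is positive semidefinite, and multiplying by $-\abs{X} \le 0$ makes the Hessian negative semidefinite, which is concavity of $c$.

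There is no genuine obstacle here: this is the standard exponential-family computation. The only points requiring care are the justification of termwise differentiation of $Z$ (immediate when $\Omega$ is finite, which covers all the paper's uses) and correctly bookkeeping the multiset $X$ so that $\abs{X}$ counts elements with multiplicity. The one conceptual step worth stating explicitly is the identification of the second-derivative matrix with a covariance matrix, since that is what makes positive semidefiniteness automatic rather than something to be checked directly.
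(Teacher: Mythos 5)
Your proof is correct, and it is the standard exponential-family argument: write $\log p(\omega)$ as a linear term minus $\log Z$, identify the gradient of $\log Z$ with $\mean{p}{T_i}$ and its Hessian with the covariance matrix of $(T_1,\dots,T_k)$, and conclude concavity from positive semidefiniteness of that covariance. The paper does not prove this proposition at all---it cites \citet{Kullback59} for it---so your write-up simply supplies the classical computation that reference contains; your cautions about termwise differentiation (harmless here since $\Omega$ is the finite alphabet in all of the paper's uses) and multiset counting are appropriate and there is no gap.
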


\begin{proof}[of Proposition~\ref{prop:concave}]
In order to prove the result we need to rearrange the terms
in $\log p(\mathcal{S})$ based on current state. In order to do that,
let us define $L_H$ to be a multiset of labels that occur in $\mathcal{S}$ while
the current state is $H$, that is,
\[
	L_H = \bigcup_{s_1, \ldots, s_n = S_i \atop i = 1, \ldots, m} \set{s_j \mid \greedy{s_1, \ldots, s_{j - 1}} = H}\quad.
\]

We can now rewrite the log-likelihood as
\begin{equation}
\label{eq:terms}
	\log p(\mathcal{S}) = \sum_{H \in V(M)} \sum_{l \in L_H} \log p(l \mid H)\quad.
\end{equation}

All we need to show now is that each term can be expressed in the form given in Proposition~\ref{prop:loglinear}.
In order to do that, define for each label $l$ an indicator function
\[
	T_l(s) =
	\begin{cases}
		1, & \text{if } l = s, \\
		0, & \text{otherwise}\quad. 
	\end{cases}
\]
Also, define indicator functions whether the transition is in $C_1$ or $C_2$,
that is, define $T_1$ and $T_2$ as
\[
	T_i(s) =
	\begin{cases}
		1, & \text{if there is } (H, F) \in C_i \text{ with } \lab{H, F} = s , \\
		0, & \text{otherwise}\quad. 
	\end{cases}
\]
We have now
\[
	p(l \mid H) = \frac{1}{Z_H}\exp\bigg(t_1T_1(l) + t_2T_2(l) + \sum_{s \in \Sigma} u_sT_s(l)\bigg)\quad.
\]
Since $Z_H$ corresponds exactly to the normalization constant in Proposition~\ref{prop:loglinear},
we have shown that
\[
	\sum_{l \in L_H} \log p(l \mid H)
\]
is a concave function. The sum of concave functions is concave, proving the result. 
\qed\end{proof}

The proof also reveals how to compute the gradient and the Hessian matrix.
These are needed if we are optimize $\log p(\mathcal{S})$.  Since $\log
p(\mathcal{S})$ is a sum of functions given in Proposition~\ref{prop:loglinear}
the gradient and the Hessian matrix of $\log p(\mathcal{S})$ can be obtained
by summing gradients and Hessian matrices of individual terms of Equation~\ref{eq:terms}.

\section{Proof of Proposition~\lowercase{\ref{prop:partcascade}}}
\begin{proof}
Let $F = \greedy{M, s_1, \ldots, s_{n - 1}}$.

If $F = H$,
then we remain in $H$ only if $s_n$ is not a label of an outgoing edge.
The probability of this is equal to $q$.

If $F \neq H$, the only way $\greedy{M, S} = H$, is that $F$ is a parent of $H$
and the label connecting $F$ to $H$ is equal to $s_n$. This gives us the result.
\qed\end{proof}

\section{Computing gradient descent}\label{sec:descent}

We use Newton-Raphson method to fit the model.
In order to do this we need to compute the gradient and the Hessian matrix with respect to the parameters.
This can be done efficiently as described by the following proposition.

\begin{proposition}
\label{prop:gradient}
Let $G$ be an episode and let $M = \mach{G}$. Let $H$ be a state in $M$.

Let $C_1$ and $C_2$ be two disjoint subsets of $E(M)$.
Define $L_i$ to be the set of labels such that $l \in L_i$ if and only if there
is an edge $(H, F) \in C_i$ labelled as $l$.
Let $J$ be a matrix of size $2 \times \abs{\Sigma}$ such that
$J_{il} = 1$ if $l \in L_i$, and $0$ otherwise.

Let $v$ be a vector of length $\abs{\Sigma}$ such that $v_l = p(l \mid H)$
is equal to the probability of generating label $l$. Define $w = Jv$.

Let $c$ be the count of how often we stay in $H$, 
\[
	c = \abs{\set{(i, j) \mid s = S_j, H = \greedy{M, (s_1, \ldots, s_i)}}}\quad.
\]

Let $n$ be a vector of length $\abs{\Sigma}$, 
\[
	n_l = \abs{\set{(i, j) \mid s = S_j, H = \greedy{M, (s_1, \ldots, s_{i - 1})}, s_i = l}},
\]
to contain the number of symbols labelled as $l$ visited in $\mathcal{S}$ while being in the state $H$.

Let $V = \diag{v}$ and let $W = \diag{w}$.  Define
\[
d_H = 
\spr{
\begin{matrix}
n - cv \\
J(n - cv) \\
\end{matrix}}
\quad\text{and}\quad
B_H = 
c\spr{
\begin{matrix}
V - vv^T& VJ^T - vw^T \\
JV - wv^T & W - ww^T\\
\end{matrix}} \quad.
\]
Then the gradient and hessian of $\log p(\mathcal{S})$ at $\set{u_i}$, $t_1$ and $t_2$ is equal to 
\[
	d = \sum_{H \in V(M)} d_H \quad\text{and}\quad B = \sum_{H \in V(M)} B_H\quad.
\]
\end{proposition}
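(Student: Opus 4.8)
The plan is to reuse the set-up from the proof of Proposition~\ref{prop:concave}. There we rewrote $\log p(\mathcal{S}) = \sum_{H \in V(M)} \ell_H$, where $\ell_H = \sum_{l \in L_H} \log p(l \mid H)$ and $L_H$ is the multiset of events of $\mathcal{S}$ generated while the greedy walk sits at $H$, and we identified $\ell_H$ with the log-likelihood of a log-linear model over $\Omega = \Sigma$ with parameters $(u_s)_{s \in \Sigma}, t_1, t_2$ and sufficient statistics $(T_s)_{s \in \Sigma}, T_1, T_2$: here $T_s$ is the indicator of the label $s$, and $T_i$ is the indicator that the generated label is the label of some edge of $C_i$ leaving $H$. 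Because differentiation is linear, it suffices to compute the gradient and the Hessian of one term $\ell_H$ and then sum over $H$, which is precisely the claimed shape $d = \sum_H d_H$, $B = \sum_H B_H$.

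For a fixed $H$ I would invoke Proposition~\ref{prop:loglinear} with the multiset $X = L_H$. By the definitions of $c$ and of the vector $n$ we have $\abs{X} = c$ and the label-$l$ multiplicity in $X$ equals $n_l$, so the empirical part of the gradient reads $\sum_{\omega \in X} T_s(\omega) = n_s$ on the label coordinates and $\sum_{\omega \in X} T_i(\omega) = \sum_l J_{il} n_l = (Jn)_i$ on the two transition coordinates. For the model part I would use that $v_l = p(l \mid H)$ by definition, so $\mean{p}{T_s} = v_s$ and $\mean{p}{T_i} = \sum_l J_{il} v_l = (Jv)_i = w_i$, where throughout $p = p(\cdot \mid H)$. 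Proposition~\ref{prop:loglinear} then gives $\partial \ell_H / \partial u_s = n_s - c v_s$ and $\partial \ell_H / \partial t_i = (Jn)_i - c w_i = (J(n - cv))_i$, i.e. the stacked vector $d_H$.

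For the Hessian the only step requiring care is the second moments of $T$ under $p(\cdot\mid H)$. Here I would appeal to Lemma~\ref{lem:strict}: the labels of the outgoing edges of $H$ are pairwise distinct, and $C_1, C_2$ are disjoint, so a single label can be boosted by at most one of $C_1, C_2$ at state $H$; thus $T_1$ and $T_2$ have disjoint supports under $p(\cdot\mid H)$, each being the indicator of a disjoint block of labels. Consequently $\mean{p}{T_s T_{s'}} = \delta_{s s'} v_s$, $\mean{p}{T_s T_i} = J_{is} v_s$, and $\mean{p}{T_i T_j} = \delta_{ij} w_i$. Substituting these into the Hessian formula $\partial^2 \ell_H/\partial r_i \partial r_j = \abs{X}\fpr{\mean{p}{T_i}\mean{p}{T_j} - \mean{p}{T_i T_j}}$ of Proposition~\ref{prop:loglinear} and grouping the coordinates into the two kinds yields, block by block, $c(V - v v^T)$ (label--label), $c(VJ^T - v w^T)$ and its transpose (label--transition), and $c(W - w w^T)$ (transition--transition), where $V = \diag{v}$ and $W = \diag{w}$; in matrix form this is $B_H$. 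With this sign, $B$ is minus the Hessian of $\log p(\mathcal{S})$, i.e. the positive semidefinite matrix one uses in the Newton--Raphson step. Summing $d_H$ and $B_H$ over $H \in V(M)$ then proves the proposition.

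The argument is essentially mechanical once Proposition~\ref{prop:loglinear} is in hand, and I expect the only mildly delicate points to be: (i) the bookkeeping that matches $\abs{L_H} = c$ and the multiplicities $n_l$ to the index conventions in the definitions of $c$ and $n$; and (ii) the observation, which is exactly where strictness (Lemma~\ref{lem:strict}) enters, that at each state the two transition statistics have disjoint support --- this is what collapses all second-moment cross terms and produces the clean block form of $B_H$ with no additional correction terms.
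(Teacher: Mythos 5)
Your proof is correct and takes essentially the same route as the paper, which likewise decomposes $\log p(\mathcal{S}) = \sum_{H \in V(M)} \sum_{l \in L_H} \log p(l \mid H)$ and applies Proposition~\ref{prop:loglinear} state by state (via the setup of Proposition~\ref{prop:concave}); you merely spell out the first- and second-moment computations and make explicit that Lemma~\ref{lem:strict} together with the disjointness of $C_1$ and $C_2$ gives $T_1$ and $T_2$ disjoint supports, which the paper leaves implicit. Your side remark that $B$ as written is the positive-semidefinite negative of the Hessian (the matrix actually used in the Newton--Raphson step) is also accurate and resolves a sign the paper's statement glosses over.
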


\begin{proof}
Proposition~\ref{prop:loglinear} and Proposition~\ref{prop:concave} imply
that the gradient of $q_H = \sum_{l \in L_H} \log p(l \mid H)$ is equal to $d_H$ and
the hessian is equal to $B_H$. Since $\log p(\mathcal{S}) = \sum_{H \in V(M)} q_H$, the result follows.
\qed\end{proof}

In order to obtain additional speed-ups, first notice that
Proposition~\ref{prop:gradient} implies that we do need to scan the original
sequence set every time. Instead it is enough to compute the vector $n$ and a scalar $c$
for each state $H$. Moreover, for a fixed episode $G$, the rank does not depend on
probabilities of individual labels that do not occur in $G$. In other words, we can
treat all labels that do not occur in $G$ as one label. This will reduce the length
of the gradient and the size of the hessian from $\abs{\Sigma} + 2$ to $\abs{V(G)} + 3$, at most.
These speed-ups make solving the model very fast in practice.

\end{document}